\newcommand{\awn}{AWN\xspace}
\renewcommand{\COMLINE}[1]{\STATE\textcolor{blue}{/*#1*/}}%
\renewcommand{\COMspec}[1]{\textcolor{blue}{/*#1*/}}%
\newcommand{%
  \algsetup{indent=0.75em}
  \algsetup{linenodelimiter=.,linenosize=\tiny}
  \begin{algorithm}[H]
    {\scriptsize
      \caption{}
      \label{pro:}
      \begin{algorithmic}[1]
        \input{processes/.tex}
	\end{algorithmic}
    }
  \end{algorithm}
}[2][]{%
  \algsetup{indent=0.75em}
  \algsetup{linenodelimiter=.,linenosize=\tiny}
  \begin{algorithm}[H]
    {\scriptsize
      \caption{#1}
      \label{pro:#2}
      \begin{algorithmic}[1]
        \input{processes/#2.tex}
	\end{algorithmic}
    }
  \end{algorithm}
}
\renewcommand{\UPD}[1]%
{\STATE{{\ensuremath{\mbox{\bf [\![}#1\mbox{\bf ]\!]}}}}}
\newenvironment{simpleProcess}{%

  \algsetup{indent=0.7em}
  \begin{algorithmic}%
  }{
  \end{algorithmic}
  }
\newcommand{\bis}{\ensuremath{\mathop{\,\raisebox{.3ex}{$\underline{\makebox[.7em]{$\leftrightarrow$}}$}\xspace}}}
\newcommand{\NN}{
    \ensuremath{%
        \mathop{\rm I\mkern-2.5mu N}%
        \nolimits%
    }%
}
\newcommand{\plat}[1]{\raisebox{0pt}[0pt][0pt]{#1}} 
\newcommand{\spaces}[1]{\ #1\ }
\newcommand{\ans}{\spaces{\wedge}}
\newcommand{\ors}{\spaces{\vee}}
\newcommand{\ims}{\spaces{\Rightarrow}}
\newcommand{\PP}{\mbox{\it PP}}
\def\comesfrom{\@transition\leftarrowfill}
\def\goesto{\@transition\rightarrowfill}
\def\ngoesto{\@transition\nrightarrowfill}
\def\Goesto{\@transition\Rightarrowfill}
\def\nGoesto{\@transition\nRightarrowfill}
\def\xmapsto{\@transition\mapstofill}
\def\nxmapsto{\@transition\nmapstofill}
\def\@transition#1{\@@transition{#1}}
\newbox\@transbox
\newbox\@arrowbox
\newbox\@downbox
\def\@@transition#1#2%
\wd\@transbox{#1}
\@transbox\hbox{$\mathop{\box\@arrowbox}\limits^{\box\@transbox}$}
\def\nrightarrowfill{$\m@th\mathord-\mkern-6mu%
  \cleaders\hbox{$\mkern-2mu\mathord-\mkern-2mu$}\hfill
  \mkern-6mu\mathord\not\mkern-2mu\mathord\rightarrow$}
\def\Rightarrowfill{$\m@th\mathord=\mkern-6mu%
  \cleaders\hbox{$\mkern-2mu\mathord=\mkern-2mu$}\hfill
  \mkern-6mu\mathord\Rightarrow$}
\def\nRightarrowfill{$\m@th\mathord=\mkern-6mu%
  \cleaders\hbox{$\mkern-2mu\mathord=\mkern-2mu$}\hfill
  \mkern-6mu\mathord\not\mathord\Rightarrow$}
\def\mapstofill{$\m@th\mathord\mapstochar\mathord-\mkern-6mu%
  \cleaders\hbox{$\mkern-2mu\mathord-\mkern-2mu$}\hfill
  \mkern-6mu\mathord\rightarrow$}
\def\nmapstofill{$\m@th\mathord\mapstochar\mathord-\mkern-6mu%
  \cleaders\hbox{$\mkern-2mu\mathord-\mkern-2mu$}\hfill
  \mkern-6mu\mathord\not\mkern-2mu\mathord\rightarrow$}
\newcommand{\ar}[1]{\mathrel{\goesto{#1}}}            
\newcommand{\nar}[1]{\mathrel{\ngoesto{#1\;}}}        
\renewcommand{\keyw}[1]{\ensuremath{{\tt #1}}}
\begin{document} 
\title{A Process Algebra for Wireless Mesh Networks}
\authorrunning{Fehnker, van Glabbeek, H\"ofner, McIver, Portmann \& Tan}
  \author{
  Ansgar Fehnker\inst{1,4},
  Rob van Glabbeek \inst{1,4},
  Peter H\"ofner\inst{1,4},
  Annabelle McIver\inst{2,1},
  Marius Portmann\inst{1,3}\and
  Wee Lum Tan\inst{1,3}
  }
\institute{
  NICTA\and
  Department of Computing, Macquarie University\and
  School of ITEE, The University of Queensland\and
  Computer Science and Engineering, University of New South Wales
}

\maketitle 
\setcounter{footnote}{0}

\begin{abstract}
We propose a process algebra for wireless mesh networks that combines
novel treatments of local broadcast, conditional unicast and data
structures.  In this framework, we model the Ad-hoc On-Demand Distance
Vector (AODV) routing protocol and (dis)prove crucial properties such
as loop freedom and packet delivery.
\end{abstract}

\section{Introduction}\label{sec:introduction}
Wireless Mesh Networks (WMNs) have recently gained considerable
popularity and are increasingly deployed in a wide range of
application scenarios, including emergency response communication,
intelligent transportation systems, mining, video surveillance, etc.
WMNs are essentially self-organising wireless ad-hoc networks that can
provide broadband communication without relying on a wired backhaul
infrastructure.  This has the benefit of rapid and low-cost network
deployment. WMNs can be considered a superset of Mobile Ad-hoc
Networks (MANETs), where a network consists exclusively of mobile end
user devices such as laptops or smartphones. 
In contrast to MANETs,
WMNs typically also contain stationary infrastructure devices called
mesh routers. However, this distinction is not relevant for the
purpose of this paper; what matters is that both MANETs and WMNs share
the characteristic of highly dynamic network topologies, due to node
mobility and the variable nature of wireless links.

In WMNs, a routing protocol is used to establish and maintain network
connectivity through paths between source and destination node
pairs. As a consequence, the routing protocol is one of the key
factors determining the performance and reliability of WMNs.
Traditionally, the main tools for evaluating and validating network
protocols are simulation and test-bed experiments.  The key limitations
of these approaches are that they are very expensive, time consuming
and non-exhaustive, i.e., they only cover a very limited set of
network scenarios. As a result, protocol errors and limitations are
still found many years after the definition and standardisation; for
example, see~\cite{MK10}.

Formal methods have a great potential in helping to address this
problem, and can provide valuable tools for design, evaluation and
verification of WMN routing protocols. The overall goal is to reduce
the ``time-to-market'' for better (new or modified) WMN protocols, and
to increase the reliability and performance of the corresponding
networks.  \pagebreak[3]

In this paper, we propose a process algebra that provides a step
towards this goal.  It combines novel treatments of 
data structures, conditional unicast and
local broadcast, and allows formalisation of
all important aspects of a routing protocol.
All these features are necessary to model 
``real life'' WMNs.
Data structures are used to store and maintain information, e.g.\ routing 
tables. The conditional unicast construct allows us to
model that a node in a network sends a message to a particular neighbour,
and if this fails, for example because the receiver has moved out of transmission range, error handling
is initiated. 
Finally, the local broadcast primitive, which allows a node to send messages to all 
its immediate neighbours, models
the wireless broadcast mechanism implemented by the physical and 
data link layer of wireless standards relevant for WMNs.
Our formalisation assumes that any broadcast message \emph{is}
received by all nodes within transmission range.%
\footnote{In reality, communication is only
half-duplex: a network node cannot receive messages while sending and hence
messages can be lost.
However, the CSMA protocol used at the link layer---not modelled
here---keeps the probability of packet loss due to two nodes (within
range) sending at the same time rather low.
Since we are examining imperfect protocols, we first of all want to
establish how they behave under optimal conditions. For this reason we
abstract from probabilistic reasoning by assuming no message loss at
all, rather than working with a lossy broadcast formalism that offers
no guarantees that any message will ever arrive.}
This abstraction enables us to interpret a
failure of guaranteed message delivery as an imperfection in the protocol, rather than as a
result of a chosen formalism not allowing guaranteed delivery.

To demonstrate the use
of our algebra, in \cite{TR11} we use it to formally model and reason
about the Ad-Hoc On-Demand Distance Vector (AODV) routing
protocol~\cite{rfc3561}---we outline this work here.
AODV is one of the most relevant and widely
used routing protocols in WMNs. Our model covers the complete core
functionality of AODV and abstracts from timing and optional features only.
The process algebra proposed in this paper allows us to prove critical protocol properties of
AODV, such as loop freedom. We also use our model to show limitations of AODV, e.g. that AODV does not guarantee
that messages are always delivered to their destinations, even if
 a stable route exists (cf. Section~\ref{sec:properties}).

\section{A Process Algebra for Wireless Routing Protocols}\label{sec:process_algebra}
In this section we propose \awn, a process algebra for the specification of WMN routing
protocols such as AODV\@. It models a WMN as an encapsulated
parallel composition of network nodes. On each node several sequential
processes may be running in parallel.  Network nodes communicate with
their direct neighbours---those nodes that are in transmission
range---using either broadcast or unicast.  Due to mobility of nodes and 
variability of wireless links, nodes can move in or out of transmission
range. The encapsulation of the entire network inhibits communications 
between network nodes and the outside world, with the exception of the 
receipt and delivery of data packets from or to clients
\footnote{The application layer that initiates packet sending
  and awaits receipt of a packet.}  of the modelled protocol that may
be hooked up to various nodes.

\subsection{A Language for Sequential Processes}
The internal state of a process is determined, in part, by the values
of certain data variables that are maintained by that process.  To
this end, we assume a data structure with several types, variables
ranging over these types, operators and predicates.  First order
predicate logic yields terms (or \emph{data expressions}) and formulas
to denote data values and statements about them. Our data structure
always contains the types \tDATA, \tMSG, {\tIP} and $\pow(\tIP)$ of
\emph{application layer data}, \emph{messages}, \mbox{\emph{IP addresses}}---or any
other node identifiers---and \emph{sets of IP addresses}.

In addition, we assume a set of \emph{process names}.
Each process name $X$ comes with a \emph{defining equation}
\vspace{-2ex}
\[
X(\keyw{var}_1,\ldots,\keyw{var}_n) \stackrel{{\it def}}{=} \p\ ,
\]
in which $n\in\NN$, the $\keyw{var}_i$ are variables and $\p$ is a
\emph{sequential process expression} defined by the grammar below. It
may contain the variables $\keyw{var}_i$ as well as $X$. However, all
occurrences of data variables in $\p$ have to be
\emph{bound}.\footnote{An occurrence of a data variable in $\p$ is
  \emph{bound} if it is one of the variables $\keyw{var}_i$, a
  variable {\msg} occurring in a subexpression $\receive{\msg}.\q$, a
  variable \keyw{var} occurring in a subexpression
  $\assignment{\keyw{var}\mathop{:=}\dexp{exp}}\q$, or an occurrence
  in a subexpression $\cond{\varphi}\q$ of a variable occurring free
  in $\varphi$.  Here $\q$ is an arbitrary sequential process
  expression.}  The choice of the underlying data structure and the
process names with their defining equations can be tailored to any
particular application of our language.

The \emph{sequential process expressions} are given by the following grammar:
\begin{eqnarray*}
\SP &::=& X(\dexp{exp}_1,\ldots,\dexp{exp}_n) ~\mid~ \cond{\varphi}\SP ~\mid~ \assignment{\keyw{var}:=\dexp{exp}} \SP~\mid~ \SP+\SP ~\mid\\
     &&\alpha.\SP ~\mid~ \unicast{\dexp{dest}}{\dexp{ms}}.\SP \prio \SP \\
\alpha &::=&
  \broadcastP{\dexp{ms}} ~\mid~ \groupcastP{\dexp{dests}}{\dexp{ms}}
  ~\mid~ \send{\dexp{ms}} ~\mid\\
  &&\deliver{\dexp{data}} ~\mid~\receive{\msg}
\end{eqnarray*}
Here $X$ is a process name, $\dexp{exp}_i$ a data expression of the
same type as $\keyw{var}_i$, $\varphi$ a data formula,
$\keyw{var}\mathop{:=}\dexp{exp}$ an assignment of a data expression
\dexp{exp} to a variable \keyw{var} of the same type, \dexp{dest},
\dexp{dests}, \dexp{data} and \dexp{ms} data expressions of types
{\tIP}, $\pow(\tIP)$, {\tDATA} and {\tMSG}, respectively, and $\msg$ a
data variable of type \tMSG.

Given a valuation of the data variables by concrete data values, the
sequential process $\cond{\varphi}\p$ acts as $\p$ if $\varphi$
evaluates to {\tt true}, and deadlocks if $\varphi$ evaluates to
{\tt false}. In case $\varphi$ contains free variables that are not
yet interpreted as data values, values are assigned to these variables
in any way that satisfies $\varphi$, if possible.
The sequential process $\assignment{\keyw{var}\mathop{:=}\dexp{exp}}\p$
acts as $\p$, but under an updated valuation of the data variable~\keyw{var}.
The sequential process $\p\mathop{+}\q$ may act either as $\p$ or as
$\q$, depending on which of the two processes is able to act at all.  In a
context where both are able to act, it is not specified how the choice
is made. The sequential process $\alpha.\p$ first performs the action
$\alpha$ and subsequently acts as $\p$.  The action
$\broadcastP{\dexp{ms}}$ broadcasts (the data value bound to the
expression) $\dexp{ms}$ to the other network nodes within transmission range,
whereas $\unicast{\dexp{dest}}{\dexp{ms}}.\p \prio \q$ is a process
that tries to unicast the message $\dexp{ms}$ to the destination
\dexp{dest}; if successful it continues to act as $\p$ and otherwise
as $\q$. 
In other words,
$\unicast{\dexp{dest}}{\dexp{ms}}.\p$ is prioritised over $\q$;
only if the action $\unicast{\dexp{dest}}{\dexp{ms}}$ is not possible,
the alternative $q$ will happen.
It models an abstraction of an acknowledgment-of-receipt mechanism
that is typical for unicast communication but absent in broadcast communication, as implemented by the link
layer of relevant wireless standards such as IEEE 802.11.
The process $\groupcastP{\dexp{dests}}{\dexp{ms}}.\p$ tries
to transmit \dexp{ms} to all destinations $\dexp{dests}$, and proceeds
as $\p$ regardless of whether any of the transmissions is successful.\
Unlike {\bf unicast} and  {\bf broadcast}, the expression {\bf groupcast} 
does not have a unique counterpart in networking.
Depending on the protocol and the implementation it can 
be an iterative unicast, a broadcast, or a multicast;
thus  {\bf groupcast} abstracts from implementation details.
The action $\send{\dexp{ms}}$ synchronously transmits a message to another
process running on the same node; this action can occur only when this
other sequential process is able to receive the message.  The
sequential process $\receive{\msg}.\p$ receives any message $m$ (a
data value of type \tMSG) either from another node, from another
sequential process running on the same node or from the client hooked
up to the local node.  It then proceeds as $\p$, but with the data
variable~$\msg$ bound to the value~$m$.
The submission of data from a client
is modelled by the receipt of a message $\newpkt{\dval{d}}{\dval{dip}}$,
where the function $\newpktID$ generates a message containing the
data $\dval{d}$ and the intended destination $\dval{dip}$. 
Data is delivered to the client by \deliver{\dexp{data}}.
\begin{table}[t]
\vspace{-2.5ex}
{\small
\[\begin{array}{@{}r@{~}l@{\qquad}l@{}}
  \xi,\broadcastP{\dexp{ms}}.\p &\ar{\broadcastP{\xi(\dexp{ms})}} \xi,\p
\\[3pt]
  \xi,\groupcastP{\dexp{dests}}{\dexp{ms}}.\p &\ar{\groupcastP{\xi(\dexp{dests})}{\xi(\dexp{ms})}} \xi,\p
\\[3pt]
  \xi,\unicast{\dexp{dest}}{\dexp{ms}}.\p \prio \q &\ar{\unicast{\xi(\dexp{dest})}{\xi(\dexp{ms})}} \xi,\p
\\[3pt]
  \xi,\unicast{\dexp{dest}}{\dexp{ms}}.\p \prio \q &\ar{\neg\textbf{unicast}(\xi(\dexp{dest}))} \xi,\q
\\[3pt]
  \xi,\send{\dexp{ms}}.\p &\ar{\send{\xi(\dexp{ms})}} \xi,\p
\\[3pt]
  \xi,\deliver{\dexp{data}}.\p &\ar{\deliver{\xi(\dexp{data})}} \xi,\p
\\[3pt]
  \xi,\receive{\keyw{msg}}.\p &\ar{\receive{m}} \xi[\keyw{msg}:=m],\p
  & \mbox{\small($\forall m\in \tMSG$)}
\\[3pt]
  \xi,\assignment{\keyw{var}:=\dexp{exp}}\p &\ar{\tau} \xi[\keyw{var}:=\xi(\dexp{exp})],\p
\\[5pt]\multicolumn{2}{c}{\displaystyle
  \frac{\emptyset[\keyw{var}_i:=\xi(\dexp{exp}_i)]_{i=1}^n,\p \ar{a} \xii,\p'}
  {\xi,X(\dexp{exp}_1,\ldots,\dexp{exp}_n) \ar{a} \xii,\p'}
  ~\mbox{(\small$X(\keyw{var}_1,\ldots,\keyw{var}_n) \stackrel{{\it def}}{=} \p$)}}
  & \mbox{(\small$\forall a\in \act$)}
\\[12pt]\displaystyle
  \frac{\xi,\p \ar{a} \xii,\p'}{\xi,\p+\q \ar{a} \xii,\p'} \quad
  \frac{\xi,\q \ar{a} \xii,\q'}{\xi,\p+\q \ar{a} \xii,\q'} \quad
  &\displaystyle
  \frac{ \xi \stackrel{\varphi}{\rightarrow}\xii}
       {\xi,\cond{\varphi}\p \ar{\tau} \xii,\p}
  & \mbox{(\small$\forall a\in \act$)}
\end{array}\]
}
\caption{\em Structural operational semantics for sequential process expressions}
\label{tab:sos sequential}
\vspace*{-8ex}
\end{table}

The internal state of a sequential process described by an expression
$\p$ is determined by~$\p$, together with a \emph{valuation} $\xi$
associating values $\xi(\keyw{var})$ to variables \keyw{var}
maintained by this process. Valuations naturally extend to
\emph{$\xi$-closed} expressions---those in which all variables are
either bound or in the domain of~$\xi$. The structural operational
semantics of Table~\ref{tab:sos sequential} is in the style of
Plotkin~\cite{Pl04} and describes how one internal state can evolve
into another by performing an \emph{action}.
The set $\act$ of actions consists of
$\broadcastP{m}$,
$\groupcastP{D}{m}$,
$\unicast{\dval{dip}}{m}$,
$\neg\textbf{unicast}(\dval{dip})$,
$\send{m}$,
$\deliver{\dval{d}}$,
$\receive{m}$
and internal actions~$\tau$, for each choice of $m \mathop{\in}\tMSG$,
$\dval{dip}\mathop{\in}\tIP$, $D\mathop{\in}\pow(\tIP)$ and
$\dval{d}\mathop{\in}\tDATA$.  Here, $\neg\textbf{unicast}(\dval{dip})$
denotes a failed unicast. Moreover $\xi[\keyw{var}\mathop{:=}v]$
denotes the valuation that assigns the value $v$ to the variable
\keyw{var}, and agrees with $\xi$ on all other variables. The empty
valuation $\emptyset$ assigns values to no variables. Hence
$\emptyset[\keyw{var}_i\mathop{:=}v_i]_{i=1}^n$ is the valuation that
\emph{only} assigns the values $v_i$ to the variables $\keyw{var}_i$
for $i=1,\ldots,n$. The rule for process names in Table~\ref{tab:sos
  sequential} (Line $9$) says that a process, named $X$, has the same
transitions as the body $p$ of its defining equation. (See~\cite{TR11}
for details.)  Finally, \plat{$\xi
  \stackrel{\varphi}{\rightarrow}\xii$} says that $\xii$ is an
extension of $\xi$, i.e., a valuation that agrees with $\xi$ on all
variables on which $\xi$ is defined, and evaluates other variables
occurring free in $\varphi$, such that the formula $\varphi$ holds
under $\xii$. All variables not free in $\varphi$ and not evaluated by $\xi$ 
are also not evaluated by $\xii$.

\subsection{A Language for Parallel Processes}
\emph{Parallel process expressions} are given by the grammar
\vspace{-0.5ex}\[
\PP ~::=~ \xi,\SP ~\mid~ \PP \parl \PP\ ,\vspace{-0.5ex}\]
\noindent where $\SP$ is a sequential process expression and $\xi$ a
valuation. An expression $\xi,\p$ denotes a sequential process
expression equipped with a valuation of the variables it maintains.
The process $P\parl Q$ is a parallel composition of $P$ and $Q$,
running on the same network node. As formalised in
Table~\ref{tab:sos}, an action $\receive{\dval{m}}$ of $P$
synchronises with an action $\send{\dval{m}}$ of $Q$ into an internal
action $\tau$. These receive actions of $P$ and send actions of $Q$
cannot happen separately. All other actions of $P$ and $Q$ occur
interleaved in $P\parl Q$. The variables of sequential processes
running on the same node are maintained separately, and thus cannot be
shared.
\begin{table}[t]\vspace{-2.5ex}
{\small\[\begin{array}{@{}r@{~}l@{}}
\displaystyle
  \frac{P \ar{a} P'}{P\parl Q \ar{a} P'\parl Q}
  \quad\mbox{\small($\forall a\neq \receive{m}$)}
  \qquad
  &\displaystyle
  \frac{Q \ar{a} Q'}{P\parl Q \ar{a} P\parl Q'}
  \quad\mbox{\small($\forall a\neq \send{m}$)}
\\[15pt]\multicolumn{2}{c}{\displaystyle
  \frac{P \ar{\receive{m}} P'\qquad Q \ar{\send{m}} Q'}
       {P\parl Q \ar{\tau} P'\parl Q'}
    \quad\mbox{\small($\forall m\in\tMSG$)}}
\end{array}\]}
\vspace{-1ex}
\caption{\em Structural operational semantics for parallel process expressions}
\label{tab:sos}%
\vspace{-8ex}%
\end{table}%

Though $\parl$ is a restricted version of synchronisation, 
which only allows information flow ``in one direction'', it reflects reality of 
WMNs. Usually two sequential processes run on the same node:
$
P \parl Q
$.
The main process $P$ deals with all protocol details  of the node, e.g., message handling 
and maintaining the data such as routing tables.
The process $Q$ manages the queueing of messages as they arrive; it is always able to
receive a message even if $P$ is busy. 
The use of message queueing in combination with $\parl$  is crucial, since
otherwise incoming messages would 
be lost when the process is busy dealing with other
messages\footnote{assuming that one employs the optional augmentation of Section~\ref{ssec:non-blocking}}, which would not be an accurate model of what happens in real implementations.

\subsection{A Language for Networks}
We model network nodes in the context of a WMN
by \emph{node expressions} of the form $\dval{ip}\mathop{:}\PP\mathop{:}R$. Here $\dval{ip}
\mathop\in \tIP$ is the \emph{address} of the node, $\PP$ is a parallel process
expression, and $R\mathop\in\pow(\tIP)$ is the \emph{range} of the node---the
set of nodes that are currently within transmission range of $\dval{ip}$.

A \emph{partial network} is then modelled by a \emph{parallel
composition} $\|$ of node expressions, one for every node in the
network, and a \emph{complete network} is a partial network within an
\emph{encapsulation operator} $[\_]$ that 
limits the communication of network nodes and the outside world to the receipt and the delivery of data packets to and from the application layer attached to the modelled protocol in the network nodes.
This yields a grammar for network
expressions:
\[N ::= [M] \qquad\qquad M::= ~~ \dval{ip}:\PP:R ~~\mid~~ M \| M\ .\]

The operational semantics of node and network expressions of
Tables~\ref{tab:sos node} and~\ref{tab:sos network} uses transition
labels
$\colonact{R}{\starcastP{m}}$,
$\colonact{H\neg K}{\listen{m}}$,
$\textbf{connect}(\dval{ip},\dval{ip}')$,
$\textbf{disconnect}(\dval{ip},\dval{ip}')$,
$\colonact{\dval{ip}}{\textbf{newpkt}(\dval{d},\dval{dip})}$,
$\colonact{\dval{ip}}{\deliver{\dval{d}}}$
and $\tau$.
Again, \mbox{$m\mathop{\in}\tMSG$}, $d\mathop{\in}\tDATA$, $R\mathop{\in}\pow(\tIP)$, and
$\dval{ip},\dval{ip}'\mathop{\in}\tIP$.  Moreover, $H,K \mathop{\in}$ $\pow(\tIP)$ are sets
of IP addresses.  The action $\colonact{R}{\starcastP{m}}$ casts a
message $m$ that can be received by the set $R$ of network nodes.  We
do not distinguish whether this message has been broadcast, groupcast
or unicast---the differences show up merely in the value of
$R$. Recall that $D\mathop\in\pow(\tIP)$ denotes a set of intended
destinations, and $\dval{dip}\mathop\in\tIP$ a single destination. A failed
unicast attempt on the part of its process is modelled as an internal
action $\tau$ on the part of a node expression.  The action $\send{m}$
of a process does not give rise to any action of the corresponding
node---this action of a sequential process cannot occur without
communicating with a receive action of another sequential process
running on the same~node.

\begin{table}[t]
\vspace{-2.5ex}
{\small
\[\begin{array}{@{}c@{\qquad}c@{}}
\displaystyle
  \frac{P \ar{\broadcastP{m}} P'}
  {
   \dval{ip}\!:\!P\!:\!R \ar{\colonact{R}{\starcastP{m}}} \dval{ip}\!:\!P'\!:\!R}
&\displaystyle
  \frac{P \ar{\groupcastP{D}{m}} P'}
  {
   \dval{ip}\!:\!P\!:\!R \ar{\colonact{R\cap D}{\starcastP{m}}} \dval{ip}\!:\!P'\!:\!R}
\\[18pt]\displaystyle
  \frac{P \ar{\unicast{\dval{dip}}{m}} P'\qquad \dval{dip}\in R}
  {\rule[11pt]{0pt}{1pt}
   \dval{ip}\!:\!P\!:\!R \ar{\colonact{\{\dval{dip}\}}{\starcastP{m}}} \dval{ip}\!:\!P'\!:\!R}
&\displaystyle
  \frac{P \ar{\neg\textbf{unicast}(\dval{dip})} P'\qquad \dval{dip}\not\in R}
  {
   \dval{ip}\!:\!P\!:\!R \ar{\tau} \dval{ip}\!:\!P'\!:\!R}
\\[18pt]\displaystyle
  \frac{P \ar{\deliver{\dval{d}}} P'}
  {\rule[11pt]{0pt}{1pt}
   \dval{ip}\!:\!P\!:\!R \ar{\colonact{\dval{ip}}{\deliver{\dval{d}}}} \dval{ip}\!:\!P'\!:\!R}
&\displaystyle
  \frac{P \ar{\receive{m}} P'}
  {\rule[11pt]{0pt}{1pt}
   \dval{ip}\!:\!P\!:\!R \ar{\colonact{\{\dval{ip}\}\neg\emptyset}{\listen{m}}} \dval{ip}\!:\!P'\!:\!R}
\\[16pt]\displaystyle
  \frac{P \ar{\tau} P'}
  {\dval{ip}\!:\!P\!:\!R \ar{\tau} \dval{ip}\!:\!P'\!:\!R}
&
  \dval{ip}\!:\!P\!:\!R \ar{\colonact{\emptyset\neg\{\dval{ip}\}}{\listen{m}}} \dval{ip}\!:\!P\!:\!R
\\[18pt]\displaystyle
  \dval{ip}\!:\!P\!:\!R \ar{\textbf{connect}(\dval{ip},\dval{ip}')} \dval{ip}\!:\!P\!:\!R\cup\{\dval{ip}'\}
&
  \dval{ip}\!:\!P\!:\!R \ar{\textbf{disconnect}(\dval{ip},\dval{ip}')} \dval{ip}\!:\!P\!:\!R-\{\dval{ip}'\}
\end{array}\]}
\vspace{-1ex}
\caption{\em Structural operational semantics for node expressions}
\label{tab:sos node}
\vspace{-8ex}
\end{table}

The action $\colonact{H\neg K}{\listen{m}}$ states that the message
$m$ simultaneously arrives at all addresses $\dval{ip}\mathbin\in H$,
and fails to arrive at all addresses $\dval{ip}\mathbin\in K$.  The
rules of Table~\ref{tab:sos network} let an
$\colonact{R}{\starcastP{m}}$-action of one node synchronise with an
$\listen{m}$ of all other nodes, where this $\listen{m}$ amalgamates
the arrival of message $m$ at the nodes in the transmission range $R$,
and the non-arrival at the
other nodes. The rules for $\listen{m}$ in Table~\ref{tab:sos node}
state that arrival of a message at a node happens if and only if the
node receives it, whereas non-arrival can happen at any time.
This embodies our assumption that, at any time, any message that is
transmitted to a node within range of the sender is actually received
by that node.
(The eighth rule in Table~\ref{tab:sos node}, having no
  premises, may appear to say that any node \dval{ip} has the option to
  disregard any message at any time. However, the encapsulation
  operator (below) prunes away all such disregard-transitions that do
  not synchronise with a cast action for which \dval{ip} is out of range.)

\begin{table}[t]
\vspace{-2ex}
{\small
\[\begin{array}{@{}c@{}}
\displaystyle
  \frac{M \ar{\colonact{R}{\starcastP{m}}} M' \quad N \ar{\colonact{H\neg K}{\listen{m}}} N'}
  {\rule[11pt]{0pt}{1pt}
   M \| N \ar{\colonact{R}{\starcastP{m}}} M' \| N'\qquad
   N \| M \ar{\colonact{R}{\starcastP{m}}} N' \| M'}
   \qquad\mbox{\footnotesize
  $\left(\begin{array}{@{}c@{}}H\subseteq R\\K \cap R = \emptyset\end{array}\right)$}\\[21pt]

  \displaystyle
  \frac{M \ar{\colonact{H\neg K}{\listen{m}}} M' \quad N \ar{\colonact{H'\neg K'}{\listen{m}}} N'}
  {\rule[11pt]{0pt}{1pt}M \| N \ar{\colonact{(H\cup H')\neg(K\cup K')}{\listen{m}}} M' \| N'}\\[18pt]

   \displaystyle
  \frac{M \ar{\colonact{R}{\starcastP{m}}} M'}{[M] \ar{\tau} [M']}
  \qquad\qquad
  \frac{M \ar{\colonact{\{\dval{ip}\}\neg K}{\listen{\newpkt{\dval{d}}{\dval{dip}}}}} M'}
  {\rule[11pt]{0pt}{1pt}
   [M] \ar{\colonact{\dval{ip}}{\textbf{newpkt}(\dval{d},\dval{dip})}} [M']}\\[16pt]

  \displaystyle
  \frac{M \ar{a} M'}{M \| N \ar{a} M' \| N}
  \quad
  \frac{N \ar{a} N'}{M \| N \ar{a} M \| N'}
  \quad
  \frac{M \ar{a} M'}{
   [M] \ar{a} [M']}
  \quad\mbox{\footnotesize
  $
  \left(\!\forall a\!\mathbin\in\!\left\{\begin{array}{@{}c@{}}
                        \colonact{\dval{ip}}{\deliver{\dval{d}}},\tau\\     
                                     \textbf{connect}(\dval{ip},\dval{ip}')\\
                                     \textbf{disconnect}(\dval{ip},\dval{ip}')\!
                    \end{array}\right\}\!\right)\!.\hspace{-.7pt}$}
\end{array}\]}
\vspace{-1ex}
\caption{\em Structural operational semantics for network expressions}
\label{tab:sos network}
\vspace*{-8ex}
\end{table}

Internal actions $\tau$ and the action $\colonact{\dval{ip}}{\deliver{\dval{d}}}$
are simply inherited by node expressions from the processes that run
on these nodes, and are interleaved in the parallel composition of
nodes that makes up a network. Finally, we allow actions
$\textbf{connect}(\dval{ip},\dval{ip}')$ and
$\textbf{disconnect}(\dval{ip},\dval{ip}')$ for
$\dval{ip},\dval{ip}'\mathop\in \tIP$ modelling a change in network
topology.  These
actions can be thought of as occurring nondeterministically, or as
actions instigated by the environment of the modelled network
protocol.  In this formalisation node $\dval{ip}'$ may be in the range of
node $\dval{ip}$, meaning that $\dval{ip}$ can send to $\dval{ip}'$,
even when the reverse does not hold. For some applications, in
particular the one to AODV in \cite{TR11}, it is useful to assume that
$\dval{ip}'$ is in the range of $\dval{ip}$ if and only if $\dval{ip}$
is in the range of $\dval{ip}'$. 
This symmetry can be enforced by adding the following rules to
Table~\ref{tab:sos node}

\mbox{}\vspace{-3ex}%
\noindent
{\small
\[\begin{array}{@{}c@{\qquad}c@{}}
\displaystyle
  \dval{ip}\!:\!P\!:\!R \ar{\textbf{connect}(\dval{ip}',\dval{ip})} \dval{ip}\!:\!P\!:\!R\cup\{\dval{ip}'\}
&
  \dval{ip}\!:\!P\!:\!R \ar{\textbf{disconnect}(\dval{ip}',\dval{ip})} \dval{ip}\!:\!P\!:\!R-\{\dval{ip}'\}
\\[11pt]\displaystyle
  \frac{\dval{ip} \not\in \{\dval{ip}'\!,\dval{ip}''\}}
  {\rule[11pt]{0pt}{1pt}
   \dval{ip}\!:\!P\!:\!R \ar{\textbf{connect}(\dval{ip}'\!,\dval{ip}'')} \dval{ip}\!:\!P\!:\!R}
&\displaystyle
  \frac{\dval{ip} \not\in \{\dval{ip}'\!,\dval{ip}''\}}
  {\rule[11pt]{0pt}{1pt}
   \dval{ip}\!:\!P\!:\!R \ar{\textbf{disconnect}(\dval{ip}'\!,\dval{ip}'')} \dval{ip}\!:\!P\!:\!R}
\end{array}\]}%
and replacing the last three rules for (dis)connect actions by\vspace{-1ex}
{\small
\[
  \frac{M \ar{a} M' \quad
        N \ar{a} N'}
  {M \| N \ar{a} M' \| N'}
  \qquad
  \frac{M \ar{a} M'}{
   [M] \ar{a} [M']}
  \qquad\mbox{\footnotesize
  $
  \left(\forall a\in\left\{\begin{array}{c}\textbf{connect}(\dval{ip},\dval{ip}')\\
                                           \textbf{disconnect}(\dval{ip},\dval{ip}')
                    \end{array}\right\}\right).$}
\]}%

The main purpose of the encapsulation operator is to ensure that no
messages will be received that have never been sent. 
In a parallel
composition of network nodes, any action $\receive{\dval{m}}$ of one
of the nodes \dval{ip} manifests itself as an action $\colonact{H\neg
K}{\listen{\dval{m}}}$ of the parallel composition, with $\dval{ip}\mathop\in H$.
Such actions can happen (even) if within the parallel composition
they do not communicate with an action $\starcastP{\dval{m}}$ of
another component, because they might communicate with a
$\starcastP{\dval{m}}$ of a node that is yet to be added to the
parallel composition. However, once all nodes of the network are
accounted for, we need to inhibit unmatched arrive actions,
as otherwise our formalism would allow any node at any time to receive
any message. One exception are those arrive actions
that stem from an action $\receive{\newpkt{\data}{\dip}}$ of a
sequential process running on a node, as those actions represent
communication with the environment.

The encapsulation operator passes through internal actions, as well as
delivery of data packets at destination nodes, this being an
interaction with the outside world. $\starcastP{m}$-actions are
declared internal actions at this level; they cannot be steered by the
outside world.  The connect and disconnect actions are passed through
in Table~\ref{tab:sos network}, thereby placing them under control of
the environment; to make them nondeterministic, their rules should
have a $\tau$-label in the conclusion, or alternatively the actions
$\textbf{connect}(\dval{ip},\dval{ip}')$ and
$\textbf{disconnect}(\dval{ip},\dval{ip}')$ should be thought of as
internal. Finally, actions $\listen{m}$ are simply blocked by
the encapsulation---they cannot occur without synchronising with a
$\starcastP{m}$ ---except for $\colonact{\{\dval{ip}\}\neg K}
{\listen{\newpkt{\dval{d}}{\dval{dip}}}}$ with
$\dval{d}\mathop\in\tDATA$ and $\dval{dip}\mathop\in \tIP$. This
action represents a new data packet \dval{d} that is submitted by a
client of the modelled protocol to node $\dval{ip}$, for delivery at
destination \dval{dip}.

\subsection{Results on the Process Algebra}
Our process algebra admits translation into one without data structures
(although we cannot describe the target algebra without using data structures):
the idea is to replace processes $\xi,p$ by $\mathcal{T}_\xi(p)$, where $\mathcal{T}_\xi$ is
defined inductively by

$\mathcal{T}_\xi(\broadcastP{\dexp{ms}}.p)=
 \broadcastP{\xi(\dexp{ms})}.\mathcal{T}_\xi(p)$,

$\mathcal{T}_\xi(\receive{\msg}.p)=
 \sum_{m\in\tMSG}\receive{m}.\mathcal{T}_{\xi[\msg:=m]}(p)$,

$\mathcal{T}_\xi(X(\dexp{exp}_1,\ldots,\dexp{exp}_n))=
 X_{\xi(\dexp{exp}_1),\ldots,\xi(\dexp{exp}_n)}$, etc.

\noindent
This requires the introduction of a process name $X_{\vec{v}}$
for every substitution instance $\vec{v}$ of the arguments of $X$.
The resulting process algebra has a structural operational semantics in the
\emph{de Simone} format, generating the same transition system---up to strong
bisimilarity, $\bis$ ---as the original. It follows that $\bis\,$, and many other
semantic equivalences, are congruences on our language \cite{dS85}.

\begin{theorem}\label{thm:1}
Strong bisimilarity is a congruence for all operators of our language.
\end{theorem}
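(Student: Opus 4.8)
The plan is to exploit the remark made just before the theorem: our language translates into a data-free process algebra whose SOS rules fall into the \emph{de Simone} format, and this translation generates the same labelled transition system up to $\bis$. De Simone's theorem \cite{dS85} states that strong bisimilarity is a congruence for any operator specified by rules in that format; so the bulk of the work is already cited, and what remains is to carefully set up the translation and check that the target specification really is in de Simone format.

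First I would make the translation $\mathcal{T}_\xi$ precise, extending the three clauses shown in the excerpt to a full inductive definition covering every production of the grammars for $\SP$, $\PP$, and network expressions. The data-free target algebra has: a constant $X_{\vec v}$ for every process name $X$ and every tuple $\vec v$ of values of the appropriate types, with defining equation obtained from $\mathcal{T}_{\emptyset[\keyw{var}_i:=v_i]}(p)$; for each concrete message $m$ a prefix $\broadcastP{m}.\,\_\,$, and similarly prefixes $\groupcastP{D}{m}$, $\send{m}$, $\deliver{d}$, $\receive{m}$, $\tau$; a (possibly infinite) choice operator $\sum$; a conditional-unicast operator $\unicast{\dval{dip}}{m}.\_\,\prio\,\_$ for each $\dval{dip},m$; the parallel operators $\parl$ and $\|$; the node wrapper $\dval{ip}:\_:R$ for each $\dval{ip},R$; and the encapsulation $[\,\_\,]$. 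Because guards $\cond{\varphi}$ and assignments $\assignment{\keyw{var}:=\dexp{exp}}$ only ever fire a $\tau$ and update $\xi$, they are absorbed into the translation and leave no operator behind. The key point to verify is that $\xi,p \ar{a}\xii,p'$ in the original semantics (Tables~\ref{tab:sos sequential}--\ref{tab:sos network}) if and only if $\mathcal{T}_\xi(p)\ar{a}\mathcal{T}_{\xii}(p')$ in the target semantics; this is a routine induction on the derivation of the transition, with the guard/assignment/recursion rules being the cases where the valuation changes. From this equivalence of transition systems the two algebras are related by a strong bisimulation that is moreover compatible with all operators, so a congruence result for the target transfers to the source.

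Next I would exhibit the SOS rules of the target algebra and check each against the de Simone format: every rule has a single premise per argument, of the shape $x_i \ar{a_i} y_i$ with distinct variables $x_i$ and $y_i$; the source of the conclusion is a single operator applied to the argument variables $x_i$ (with the non-active arguments passed through unchanged); the target of the conclusion is a term built from the $y_i$ for active arguments and the $x_j$ for the rest, with no repeated variables and no nested operators beyond what de Simone allows; and there are no negative premises or lookahead. The prefix, choice, and parallel rules are textbook de Simone. The conditional-unicast operator needs a small check: it has \emph{two} rules, one firing $\unicast{\dval{dip}}{m}$ and continuing as the first argument, one firing $\neg\textbf{unicast}(\dval{dip})$ and continuing as the second, each with a single premise on the respective argument and the other argument simply discarded—still de Simone. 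The node and network rules likewise have exactly one active premise and pass the remaining components through; the synchronisation rule for $\|$ (a $\starcastP{m}$ with a $\listen{m}$) and the one for $\parl$ (a $\receive{m}$ with a $\send{m}$) are the standard two-premise de Simone synchronisations.

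The main obstacle I anticipate is \emph{infinitary} branching: $\receive{\msg}.p$ translates to $\sum_{m\in\tMSG}\receive{m}.\mathcal{T}_{\xi[\msg:=m]}(p)$, an infinite sum, and the conditional-guard rule $\cond{\varphi}$ can extend $\xi$ by \emph{any} value satisfying $\varphi$, so some operators are indexed by infinite sets and some rules come in infinitely many instances. De Simone's original result is usually stated for finitely many operators and rules, so I would either appeal to the standard fact that the congruence argument for de Simone (and for the coarser GSOS/ntyft formats) goes through verbatim for arbitrary—including infinite—signatures and rule sets provided the format restrictions on variables are met, or, to be self-contained, directly construct the witnessing bisimulation: given $\bis$-related closed terms, show that plugging them into any context yields $\bis$-related terms, by structural induction on the context, using at each operator the fact that its rules are in de Simone format so that a move of the whole can be matched by moves of the parts. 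I would also note the one genuinely delicate spot, the encapsulation operator: its rule turns $\starcastP{m}$ into $\tau$ and \emph{blocks} bare $\listen{m}$ actions except the $\newpkt{}{}$ ones—but blocking is just "no rule for that label", which is still within the format, so no negative premise is needed. Hence the format check survives, de Simone's theorem applies to the target, and via the transition-system isomorphism the congruence property pulls back to our language, which is the assertion of Theorem~\ref{thm:1}.
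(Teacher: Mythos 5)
Your proposal follows essentially the same route as the paper: the paper's entire proof is the observation that the language translates into a data-free algebra whose SOS is in de Simone format and generates the same transition system up to $\bis$, so that congruence follows directly from \cite{dS85}, and you are simply fleshing out the details (the translation, the format check, the infinitary signature, encapsulation-as-absence-of-rules) that the paper leaves implicit. One small slip that does not affect correctness: the two rules for the conditional unicast operator in the target algebra are premise-free, prefix-like axioms rather than rules with a premise on the continuation argument, which only makes the de Simone format check easier.
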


\noindent
This is a deep result that usually takes many pages to establish (e.g.,~\cite{SRS10}).
Here we get it directly from the existing theory on structural
operational semantics, as a result of carefully designing our
language within the disciplined framework described by de Simone~\cite{dS85}.
\qed

\begin{theorem}\label{thm:2}
$\parl$ is associative, and $\|$ is associative and commutative, up to $\bis$\,.
\end{theorem}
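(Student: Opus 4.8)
\noindent
The plan is to exhibit, for each claimed identity, an explicit relation between process expressions and to verify directly that it is a strong bisimulation; since $\bis$ is itself the largest strong bisimulation, this suffices. No induction on the structure of processes is needed --- only a case distinction over the (finitely many shapes of) operational rules in Tables~\ref{tab:sos} and~\ref{tab:sos network}. Note that $\parl$ is claimed associative but not commutative: its synchronisation rule lets a $\receive{m}$ of the \emph{left} operand meet a $\send{m}$ of the \emph{right} one, and it is exactly this left/right asymmetry that breaks commutativity while being preserved under re-bracketing.

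For commutativity of $\|$ the relation $\mathcal{R}_{\mathrm c}=\{\,(M\|N,\;N\|M) : M,N\text{ partial networks}\,\}$ does the job; it is already symmetric, so it suffices to match every transition of $M\|N$ by one of $N\|M$ with a residual again in $\mathcal{R}_{\mathrm c}$. Inspecting Table~\ref{tab:sos network}: the cast/arrive rule already lists both $M\|N$ and $N\|M$ in its conclusion, with the same side conditions $H\subseteq R$ and $K\cap R=\emptyset$ on the listening party, so a $\starcastP{m}$-transition of $M\|N$ is mirrored verbatim; the arrive/arrive rule amalgamates the positive and negative components of the two $\listen{m}$-labels via $\cup$, which is commutative, so both orderings produce the same label; and the interleaving rules for $\tau$, for delivery of a data packet, and for connect/disconnect --- as well as the fully-synchronised connect/disconnect rule of the symmetric variant --- are patently symmetric.

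For associativity I take $\mathcal{R}_{\mathrm a}=\{\,((P\parl Q)\parl T,\;P\parl(Q\parl T)) : P,Q,T\,\}$ (and, mutatis mutandis, the analogue for $\|$) and verify both bisimulation clauses by decomposing an outer rule of Table~\ref{tab:sos} and then the inner one that fired. What tames the asymmetry of $\parl$ is the observation that, at top level, $(P\parl Q)\parl T$ can emit $\receive{m}$ only through $T$ and $\send{m}$ only through $P$, and likewise for $P\parl(Q\parl T)$; hence the only genuinely new cases are those in which the middle process $Q$ interacts. A $\send{m}$ of $Q$ meeting a $\receive{m}$ of $P$ appears as a $\tau$ of $P\parl Q$ under the left bracketing but as a $\send{m}$ of $Q\parl T$ synchronising with $P$ under the right one (and dually for a $\receive{m}$ of $Q$ meeting a $\send{m}$ of $T$); in both bracketings the overall label is $\tau$ and the residual re-associates, so the clauses hold, all remaining cases being purely interleaved and immediate. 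For the $\|$-analogue the only non-routine case is a cast: a $\starcastP{m}$ originating in one of the three networks must synchronise with the $\listen{m}$ of \emph{each} of the other two, and the arrive/arrive rule combines those by $\cup$, which is associative; moreover the listening side conditions $H\subseteq R$ and $K\cap R=\emptyset$ distribute over the union of the two listening parties, so the combined constraint is the same under either bracketing. One then runs this through the three sub-cases according to which of the three networks casts; each is routine once the union bookkeeping is set up, and the amalgamated $\listen{m}$ rule together with the interleaved (or, in the symmetric variant, fully-synchronised) connect/disconnect rules re-associate at once.

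The main obstacle is thus not conceptual but this cast case of $\|$-associativity: keeping track of how a single $\starcastP{m}$ is matched against the $\cup$-amalgamated $\listen{m}$ of the remaining nodes across the re-bracketing, and checking that the $H\subseteq R$, $K\cap R=\emptyset$ side conditions are stable under the regrouping. Everything else is a mechanical traversal of the rule sets. Finally, since $\bis$ is a congruence (Theorem~\ref{thm:1}), these identities also hold in arbitrary network contexts --- in particular under the encapsulation operator $[\_]$ and inside larger parallel compositions.\qed
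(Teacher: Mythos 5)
Your proof is correct, but it takes a genuinely different route from the paper. The paper does not verify any bisimulation by hand: it invokes the \emph{ASSOC-de Simone} meta-theorem of \cite{CMR08}, which guarantees associativity up to $\bis$ for any operator whose rules fit the required format. Concretely, the paper classifies the three rules for $\parl$ (and likewise for $\|$) as rules of types $1_a$, $2_a$ and $7_{(a,b)}$, checks the single non-trivial format condition $7_{(a,b)} \ims (1_a \Leftrightarrow 2_b) \ans (2_a \Leftrightarrow 2_{\gamma(a,b)}) \ans (1_b \Leftrightarrow 1_{\gamma(a,b)})$, and verifies that the partial communication function $\gamma$ (with $\gamma(\receive{m},\send{m})=\tau$ for $\parl$, and the cast/arrive and arrive/arrive clauses for $\|$) is associative; commutativity of $\|$ then follows by symmetry of its rule set. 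Your direct construction of the relations $\mathcal{R}_{\mathrm a}$ and $\mathcal{R}_{\mathrm c}$ and the case analysis over the rules is sound --- in particular you correctly identify the two essential points, namely that the left/right asymmetry of $\parl$ (receive only from the right subterm, send only from the left) is invariant under re-bracketing so that the only interesting cases are the two synchronisations involving the middle process, and that the $H\subseteq R$, $K\cap R=\emptyset$ side conditions on an amalgamated $\listen{m}$ label are equivalent to the conjunction of the conditions on its components. What the meta-theorem buys is exactly that this bookkeeping is done once and for all at the level of rule formats, so the proof survives extensions of the language that stay within the format; what your direct argument buys is self-containedness and no reliance on an external result. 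Your closing appeal to Theorem~\ref{thm:1} for congruence is a pleasant corollary but is not needed to establish the statement itself.
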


\begin{proof}
The operational rules for these operators fit a format presented in \cite{CMR08},
guaranteeing associativity up to~$\bis$.
The {\emph ASSOC-de Simone format} of \cite{CMR08} applies to all 
transition system specifications (TSSs) in de
Simone format, and allows $7$ different types of rules (named $1$--$7$) for the operators in question.
Our TSS is in De Simone format; the three rules for $\parl$ of
Table~\ref{tab:sos} are of types $1$, $2$ and $7$, respectively.
To be precise, it has rules $1_a$ for $a \in \act \setminus                      
\{\receive{m}\mid m\mathop\in\tMSG\}$, rules $2_a$ for $a \in \act \setminus                      
\{\send{m}\mid m\mathop\in\tMSG\}$, and rules $7_{(a,b)}$ for
$(a,b)\in\{(\receive{m},\send{m})\mid m\mathop\in\tMSG\}$.
Moreover, the partial \emph{communication function}
$\gamma:\act\times\act\rightharpoonup \act$ is given by
$\gamma(\receive{m},\send{m})=\tau$.
The main result of \cite{CMR08} is that an operator is guaranteed to
be associative, provided that $\gamma$ is associative and six
conditions are fulfilled. In the absence of rules of types 3, 4, 5
and 6, five of these conditions are trivially fulfilled, and the
remaining one reduces to
$$7_{(a,b)} \ims (1_a \Leftrightarrow 2_b)
           \ans (2_a \Leftrightarrow 2_{\gamma(a,b)})
           \ans (1_b \Leftrightarrow 1_{\gamma(a,b)})\ .$$
Here $1_a$ says that rule $1_a$ is present, etc.
This condition is met for $\parl$ because the antecedent holds only
when taking $(a,b)=(\receive{m},\send{m})$ for some $m\mathop\in\tMSG$.
In that case $1_a$ is false, $2_b$ is false, and $2_a$, $2_\tau$,
$1_b$ and $1_\tau$ are true. Moreover, $\gamma(\gamma(a,b),c)$ and
$\gamma(a,\gamma(b,c))$ are never defined, thus making $\gamma$
trivially associative.
The argument for $\|$ being associative proceeds likewise.
Here the only nontrivial condition is the associativity of $\gamma$,
given by
\begin{eqnarray*}
\gamma(\colonact{R}{\starcastP{m}},\colonact{H\neg K}{\listen{m}})&=&
 \gamma(\colonact{H\neg K}{\listen{m}},\colonact{R}{\starcastP{m}})\\
&=& \colonact{R}{\starcastP{m}}\ ,
\end{eqnarray*}
\vspace{-4ex}

\noindent
provided $H\subseteq R$ and $K\cap R =\emptyset$, and
$$\gamma(\colonact{H\neg K}{\listen{m}},\colonact{H'\neg K'}{\listen{m}})
 =\colonact{(H\cup H')\neg(K\cup K')}{\listen{m}}\ .$$
Commutativity of $\|$ follows by symmetry.
\hspace{-3cm}
\qed
\end{proof}

\subsection{Optional Augmentation to Ensure Non-Blocking Broadcast}
\label{ssec:non-blocking}

Our process algebra, as presented above, is intended for networks in which
each node is \emph{input enabled} \cite{LT89}, meaning that it is
always ready to receive any message, i.e., able to engage in the
transition $\receive{m}$ for any $m\in \tMSG$. In our model of AODV \cite{TR11}
we ensure this by equipping each node
with a message queue that is always able to accept messages for later
handling---even when the main sequential process is currently busy.
This makes our model \emph{non-blocking}, meaning that no sender can
be delayed in transmitting a message simply because one of the
potential recipients is not ready to receive it.

However, the operational semantics does allow blocking if one would
(mis)use the process algebra to model nodes that are not input enabled.
This is a logical consequence of insisting that any broadcast
message \emph{is} received by all nodes within transmission range.

Since the possibility of blocking can be regarded as a bad property of
broadcast formalisms, one may wish to take away the expressiveness of
\pagebreak
the language that allows modelling a blocking broadcast. This is the
purpose of the following optional augmentations of our operational
semantics.

The first possibility is the addition of the rule
\raisebox{4pt}[12pt]{$\frac{P \nar{\receive{m}}}  {\rule[11pt]{0pt}{1pt}
  \dval{ip}:P:R \ar{\colonact{\{\dval{ip}\}\neg\emptyset}{\listen{m}}} \dval{ip}:P:R}\;$}.
It states that a message may arrive at a node \dval{ip} regardless whether
the node is ready to receive it; if it is not ready, the message is simply
ignored, and the process running remains in the same state.

\newcommand{\discard}{\textbf{ignore}(m)}
A variation on the same idea, elaborated in \cite[Sect. 4.5]{TR11}, stems from
the \emph{Calculus of Broadcasting Systems} \cite{CBS}.  It
consists in eliminating the negative premise in the above rule in
favour of \emph{discard} actions, thereby remaining within the de
Simone format of structural operational semantics.
Either of these two optional augmentations of our semantics gives rise
to the same transition system. Moreover, when modelling networks in
which all nodes are input enabled---as we do in~\cite{TR11}---the added
rule for node expressions will never be used, and the resulting
transition system is the same whether we use augmentation or not.

\subsection{Illustrative Example}
\renewcommand{\a}{a}
\renewcommand{\b}{b}
\newcommand{\mymsg}[2]{\keyw{mg}(#1,#2)}
To illustrate the use of our process algebra \awn, we consider a network
of two nodes $\a$ and $\b$ ($\a,\b\in\tIP$)
on which the same process is running, although starting in different states.
The process describes a simple (toy-)protocol: whenever a new data packet 
for destination \dval{dip} ``appears'',\footnote{In
this small example, we assume that new data packets just
  appear ``magically''; of course one could use the message
\newpkt{\data}{\dip} instead.}
the data is broadcast through the network until it finally reaches \dval{dip}.
A node alternates between broadcasting, and receiving and handling a message.
The \dval{data} stemming from a message received by node \dval{ip} will be delivered to
the application layer if the message is destined for \dval{ip} itself. Otherwise the node
forwards the message. 
Every message travelling through the network and handled by the protocol 
has the form $\mymsg{\dval{data}}{\dval{dip}}$, where $\dval{data}\in\tDATA$ is the data to be sent 
and $\dval{dip}\in\tIP$ is its destination.
The behaviour of each node can be modelled by:%
\newcommand{\XP}{\keyw{X}}%
\newcommand{\YP}{\keyw{Y}}%
\begin{simpleProcess}
	\item[\XP(\ip;\,\data,\,\dip)]\hspace{-\labelsep}\ $\stackrel{{\it def}}{=}
 	\textbf{broadcast}(\mymsg\data\dip).\YP(\ip)$
	\item[\YP(\ip)]\hspace{-\labelsep}
		$\stackrel{{\it def}}{=}$ \textbf{receive}(\keyw{m}).%
		([$\keyw{m} \mathord= \mymsg\data\dip\wedge\dip\mathord=\ip$] \textbf{deliver}(\data).\YP(\ip)\\ 
		\hspace{7.25em}
		+ [$\keyw{m} \mathord= \mymsg\data\dip\wedge\dip\mathord{\not=}\ip$] \XP(\ip;\,\data,\,\dip))\ .
\end{simpleProcess}%
\vspace{2pt}
If a node is in a state $\XP(\dval{ip};\dval{data},\dval{dip})$, where $\dval{ip}\in\tIP$
is the node's stored value of its own IP address,
it will broadcast $\mymsg{\dval{data}}{\dval{dip}}$ and continue in state $\YP(\dval{ip})$, 
meaning that all information about the message is dropped.
If a node in state $\YP(\dval{ip})$ receives a message $m$---a value that will be assigned to the variable
$\keyw{m}$---it has two ways to continue: process [$\keyw{m} \mathord= \mymsg\data\dip\wedge\dip\mathord=\ip$] \textbf{deliver}(\data).\YP(\ip) is enabled if 
the incoming message has the form $\mymsg{\dval{data}}{\dval{dip}}$
and the node itself is the destination of the
message ($\dip\mathord=\ip$). In
that case the data distilled from $m$ will be delivered to the application layer, and the process returns
to $\YP(\dval{ip})$. Alternatively, if [$\keyw{m} \mathord= \mymsg\data\dip\wedge\dip\mathord{\not=}\ip$], the process continues as
$\XP(\dval{ip};\dval{data},\dval{dip})$, which will then broadcast another
message with contents $\dval{data}$ and $\dval{dip}$.
Note that calls to processes use expressions as parameters.

Let us have a look at three scenarios.
First, assume that the nodes $\a$ and $\b$ are within transmission range of each other; node $\a$ in state
$\XP(\a;d,\b)$, 
and node $\b$ in $\YP(\b)$. This is formally expressed as
$[\colonact{\a}{\XP(\a;d,\b)}\mathop{:}\{\b\}\,\|\,\colonact{\b}{\YP(\b)}\mathop{:}\{\a\}]$,
although for compactness of presentation, below we just write $[\XP(\a;d,\b)\,\|\,\YP(\b)]$.
In this case, node $a$ broadcasts the message $\mymsg{d}{\b}$ and
continues as $\YP(\a)$. Node $\b$ receives the message, \textbf{deliver}s $d$
(after evaluation of the message) and continues as $\YP(\a)$.  Formally,
we get transitions from one state to the other:\vspace{-1ex}
\newcommand{\sm}[1]{\mbox{$\scriptstyle #1$}}
\[\begin{array}{r}
[\XP(\a;d,\b)\,\|\,\YP(\b)]
	\ar{{\sm{\a}:\textbf{*cast}\sm{(\mymsg{{d}}{\b})}}}%
	\ar{\tau}
	\ar{{\sm{\b}:\textbf{deliver}\sm{({d})}}} 
[\YP(\a)\,\|\,\YP(\b)].
\vspace{-1ex}
\end{array}\]
Here, the $\tau$-transition is the action of evaluating the first of the two guards of a process $\YP$,
and we left out the two intermediate expressions.

Second, assume that the nodes are not within transmission range,
with the initial process of $\a$ and $\b$ the same as above; formally
[$\colonact{\a}{\XP(\a;d,\b)}\mathop{:}\emptyset\,\|\,\colonact{\b}{\YP(\b)}\mathop{:}\emptyset$].
As before, node $a$ broadcasts $\mymsg{d}{\b}$ and continues in $\YP(\a)$; but this
time the message is not received by any node; hence
no message is forwarded or delivered and both nodes end up running process~$\YP$.

For the last scenario, we assume that $a$ and $b$ are
within transmission range and that
they have the initial states $\XP(\a;d,\b)$ and $\XP(\b; e,\a)$.
Without the augmentation of \SSect{non-blocking},
the network expression $[\XP(\a;d,\b)\,\|\,\XP(\b;e,\a)]$ admits no transitions at all;
neither node can broadcast its message, because the other node is not listening.
With the optional augmentation,
assuming that node $a$ sends first:\vspace{-2ex}
\[\begin{array}{r}
[\XP(\a;d,\b)\,\|\,\XP(\b;e,\a)] 
	\ar{{\sm{\a}:\textbf{*cast}\sm{((\mymsg{d}{\b})}}} 
[\YP(a)\,\|\,\XP(\b;e,\a)]\hspace{4.4em}\\
	\ar{{\sm{\b}:\textbf{*cast}\sm{(\mymsg{e}{\a})}}}
	\ar{\tau}
\ar{{\sm{\a}:\textbf{deliver}\sm{(e)}}} [\YP(\a)\,\|\,\YP(\b)].
\vspace{-1ex}
\end{array}\]
Unfortunately, node $\b$ is initially in a state where it cannot receive a message,
so $\a$'s message ``remains unheard'' and $\b$ will never deliver that message.
To avoid this behaviour, and ensure that both messages get delivered,
as happens in real WMNs, a message queue can be
introduced (see \Sect{AODV_model}). Using a message queue, 
the optional augmentation is not needed, since any node is always in a state where it can receive a message.

\section{Routing Protocols}\label{sec:aodv}
The features of our process algebra were
largely determined by what we needed to enable a complete and accurate
formalisation of wireless network protocols and their properties.

We use the proposed algebra to formally model and reason about the Ad hoc
On-demand Distance Vector 
(AODV) routing protocol~\cite{rfc3561}.
Due to lack of space, we can only briefly report on our formalisation and 
the properties proved. All details can be found in~\cite{TR11}.

Since routing protocols for WMNs are based on common concepts in wireless networks in general, 
such as local broadcast, we do expect that 
our process algebra can easily be used to model other wireless network protocols.

\subsection{Ad-Hoc On-Demand Distance Vector Routing Protocol}\label{sec:AODV}
AODV~\cite{rfc3561} is a widely-used routing protocol designed for
MANETs, and is one of the four protocols currently standardised by the
IETF MANET working
group\footnote{\url{http://datatracker.ietf.org/wg/manet/charter/}}.
It also forms the basis of new WMN routing protocols, including the
upcoming IEEE 802.11s wireless mesh network standard~\cite{IEEE80211s}.

AODV is a reactive protocol: routes are established only on demand. A
route from a source node $s$ to a destination node $d$ is a sequence
of nodes $[s,n_1,\dots,n_k,d]$, where $n_1$, $\dots$, $n_k$ are
intermediate nodes located on the path from $s$ to $d$.  Its basic
operation can best be explained using a simple example topology shown
in Fig.~\ref{fig:topology}(a), where edges connect nodes within
transmission range. We assume node $s$ wants to send a data packet to
node~$d$, but $s$ does not have a valid routing table entry for
$d$. Node $s$ initiates a route discovery mechanism by broadcasting a
route request (RREQ) message, which is received by $s$'s immediate
neighbours $a$ and $b$. We assume that neither $a$ nor $b$ knows a
route to the destination node $d$.\footnote{In case an intermediate
  node knows a route to $d$, it directly sends a route reply back.}
Therefore, they simply re-broadcast the message, as shown in
Fig.~\ref{fig:topology}(b). Each RREQ message has a unique identifier
which allows nodes to ignore duplicate RREQ messages that they have
handled before.

When forwarding the RREQ message, each intermediate node updates its
routing table and adds a ``reverse route'' entry to $s$, indicating
via which next hop the node $s$ can be reached, and the distance in
number of hops. Once the first RREQ message is received by the
destination node $d$ (we assume via $a$), $d$ also adds a reverse
route entry in its routing table, saying that node $s$ can be reached
via node $a$, at a distance of $2$ hops.

Node $d$ then responds by sending a route reply (RREP) message back to
node $s$, as shown in Fig.~\ref{fig:topology}(c). In contrast to the
RREQ message, the RREP is unicast, i.e., it is sent to an individual
next hop node only. The RREP is sent from $d$ to $a$, and then to $s$,
using the reverse routing table entries created during the forwarding
of the RREQ message. When processing the RREP message, a node creates
a ``forward route'' entry into its routing table. For example, upon
receiving the RREP via $a$, node $s$ creates an entry saying that $d$
can be reached via $a$, at a distance of $2$ hops. At the completion
of the route discovery process, a route has been established from $s$
to $d$, and data packets can start to flow.
\begin{figure}[t]
  \vspace{-.5ex}
 \begin{center}
   \begin{tabular}[b]{r@{}l@{\hspace{12mm}}r@{}l@{\hspace{12mm}}r@{}l}
   (a)&
   \includegraphics[scale=0.8]{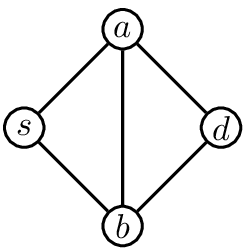}&
   (b)&
   \includegraphics[scale=0.8]{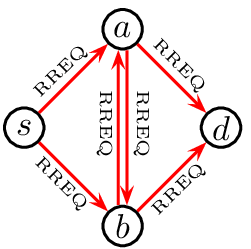}&
   (c)&
   \includegraphics[scale=0.8]{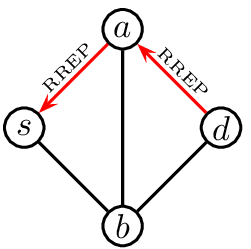}
   \end{tabular}
  \vspace{-.8ex}
   \caption{Example network topology}
   \label{fig:topology}
 \end{center}
 \vspace*{-8ex}%
\end{figure}

In the event of link and route breaks, AODV uses route error (RERR)
messages to inform affected nodes. Sequence numbers are another
important aspect of AODV, and are used to indicate the freshness of
routing table entries for the purpose of preventing routing loops.

\subsection{A Formal Model of AODV}\label{sec:AODV_model}

\noindent
Our formalisation of AODV is a faithful rendering of the IETF's
specification~\cite{rfc3561} with the exception of time and any
optional features.  Additionally, we model the submission, forwarding
and delivery of data packets---this is not part of the AODV
standard, but crucial to trigger the route discovery
process of AODV\@.

In this section we give an overview of the formal model, setting out
the details only for the RREP message handling.  Full details are
available in~\cite[Sect. 6]{TR11}.

\begin{table}[t]
\begin{center}
\setlength{\tabcolsep}{2.5pt}
{\footnotesize\scriptsize
\begin{tabular}{@{}|l|l|l|@{}}
\hline
\textbf{Type} & \textbf{Variables} & \textbf{Description}\\
\hline
 \tIP			&\ip, \dip, \oip, \rip, \sip, \nhip	&node identifiers\\
 \tSQN		&\sn, \dsn, \rsn	&sequence numbers\\
 \tSQNK		&\keyw{dsk}        			&sequence-number-status flag\\
 \tFLAG		&\flag					&route validity\\
 \NN			&\hops					&hop counts\\
 \tROUTE  	& \r 						&routing table entries\\
 \tRT			&\rt						&routing tables\\		
 \tRREQID		&\rreqid					&request identifiers\\
 \tPendingRREQ&		                                &pending-request flag\\
 \tQUEUES       &\queues				&store of queued data packets\\
 \tMSG		&\msg					&messages\\
{[\tMSG]}			&\msgs				&message queues\\
$\pow(\tIP)$			&$\pre$	&sets of identifiers (precursors, destinations,  \dots)\\
$\tIP\rightharpoonup\tSQN$      &\dests &sets of destinations with sequence numbers\\
$\pow(\tIP\times\tRREQID)$	&\rreqs			&sets of request identifiers  with originator IP\\
\hline\hline
\multicolumn{2}{|l|}{\textbf{Constant/Predicate}}& \textbf{Description}\\
\hline
\multicolumn{2}{|l|}{$0:\tSQN,~1:\tSQN$}&
	unknown, smallest sequence number\\
\multicolumn{2}{|l|}{$\mathord{<} \subseteq \tSQN\times\tSQN$}&
	strict order on sequence numbers\\
\multicolumn{2}{|l|}{$\kno,\unkno:\tSQNK$}&
	constants to distinguish  known and unknown sqns\\
\multicolumn{2}{|l|}{$\val,\inval:\tFLAG$}&
	constants to distinguish  valid and invalid routes\\
\multicolumn{2}{|l|}{$\pen,\nonpen:\tPendingRREQ$}&
	constants to distinguish (non-)pending RREQs\\
\multicolumn{2}{|l|}{${[\,]}:{[\tMSG]}$}&
	empty queue\\
\hline\hline
\multicolumn{2}{|l|}{\textbf{Operator}} & \textbf{Description}\\
\hline
\hline
\multicolumn{2}{|l|}{$\fnsetP:\tQUEUES\times\tIP\times\tPendingRREQ\to\tQUEUES$}&
	set the pending-request flag\\
\multicolumn{2}{|l|}{$(\_\,,\_\,,\_\,,\_\,,\_\,,\_\,,\_\,):$}&
	generates a routing table entry\\
\multicolumn{2}{|l|}{$\qquad
	\tIP\mathord\times \tSQN \mathord\times\tSQNK \mathord\times\tFLAG \mathord\times \NN
        \mathord\times \tIP \mathord\times \pow(\tIP) \mathop\rightarrow \tROUTE$}&
	\\
\multicolumn{2}{|l|}{$\fninc:\tSQN \rightarrow \tSQN$}&
	increments the sequence number\\
\multicolumn{2}{|l|}{$\fnsqn:\tRT \times \tIP \to \tSQN$}&
	returns the sequence number of a particular route\\	
\multicolumn{2}{|l|}{$\fnstatus:\tRT\times\tIP\rightharpoonup\tFLAG$}&
	returns the validity of a particular route\\
\multicolumn{2}{|l|}{$\fndhops:\tRT \times \tIP \rightharpoonup \NN$}&
	returns the hop count of a particular route\\
\multicolumn{2}{|l|}{$\fnnhop:\tRT \times \tIP \rightharpoonup \tIP$}&
	returns the next hop of a particular route\\
\multicolumn{2}{|l|}{$\fnprecs:\tRT \times \tIP \rightharpoonup \pow(\tIP)$}&
	returns the set of precursors of a particular route\\
\multicolumn{2}{|l|}{$\fnakD, \fnkD:\tRT \rightarrow\pow(\tIP)$}&
	returns the set of valid, known destinations\\
\multicolumn{2}{|l|}{$\fnaddprecrt : \tRT\times\tIP\times \pow(\tIP) \rightharpoonup \tRT$}&
	adds a set of precursors to an entry inside a table\\	
\multicolumn{2}{|l|}{$\fnupd:\tRT \times \tROUTE \rightharpoonup \tRT$}&
	updates a routing table with a route (if fresh enough)\\
\multicolumn{2}{|l|}{$\fninv:\tRT \times (\tIP\rightharpoonup\tSQN) \rightarrow \tRT$}&
	invalidates a set of routes within a routing table\\
\multicolumn{2}{|l|}{$\rrepID:\NN \times \tIP \times \tSQN \times \tIP \times \tIP \rightarrow \tMSG$}&
	generates a route reply\\
\multicolumn{2}{|l|}{$\rerrID:(\tIP\rightharpoonup\tSQN) \times \tIP \rightarrow \tMSG$}&
	generates a route error message\\
\hline
\end{tabular}
}
\end{center}
\caption{Data structure}
\label{tab:types}
\vspace*{-7.5ex}
\end{table}

{
\renewcommand{\ip}{\dval{ip}}
\renewcommand{\dip}{\dval{dip}}
\renewcommand{\oip}{\dval{oip}}
\renewcommand{\sip}{\dval{sip}}
\renewcommand{\rip}{\dval{rip}}
\renewcommand{\rt}{\dval{rt}}
  \newcommand{\nrt}{\dval{nrt}}
\renewcommand{\r}{\dval{r}}
  \newcommand{\s}{\dval{s}}
  \newcommand{\nr}{\dval{nr}}
  \newcommand{\ns}{\dval{ns}}
\renewcommand{\osn}{\dval{osn}}
\renewcommand{\dsn}{\dval{dsn}}
\renewcommand{\rsn}{\dval{rsn}}
\renewcommand{\dsk}{\dval{dsk}}
\renewcommand{\flag}{\dval{flag}}
\renewcommand{\hops}{\dval{hops}}
\renewcommand{\nhip}{\dval{nhip}}
\renewcommand{\pre}{\dval{pre}}
  \newcommand{\npre}{\dval{npre}}
\renewcommand{\dests}{\dval{dests}}
\renewcommand{\rreqid}{\dval{rreqid}}
\renewcommand{\rreqs}{\dval{rreqs}}

Table~\ref{tab:types} lists the types and operators needed for the formalisation presented in this section.
For example, $\tRT$ is the type of routing tables---modelled as set of
entries $(\dip, \dsn, \dsk, \flag, \hops, \nhip, \pre)$, each providing
information on a route of length $\hops$ with ultimate destination
$\dip$.  The next hop address on that route is $\nhip$. The value
$\dsn$ is a \emph{sequence number}, intended to describe the ``freshness" of
this entry, with $\dsk$ a Boolean indicating whether or not that number
is known to be an up-to-date indicator of the freshness of the entry. The values
$\flag$ and $\pre$, respectively, describe
the validity of the entry, and its \mbox{\emph{precursors}}---a
set of nodes that ``rely" on it to ensure the validity of their own entries.  In a
routing table $\rt$ there is at most one entry for each destination
$\dip$; $\sqn{\rt}{\dip}$ denotes the sequence number of that entry
and likewise for the
operators $\fnstatus$ and $\fndhops$.  
Another example is $\upd{\rt}{\r}$, which updates a routing table $\rt$ with
an entry $r$. This is one of the major activities of AODV\@.
It adds $\r:=(\dip, \dsn, \dsk, \flag, \hops, \nhip,
\pre)$ to the routing table $\rt$ if no entry for $\dip$ is
present. The existing entry is overwritten by $\r$ if the latter's
sequence number is larger ($\dsn > \sqn{\rt}{\dip}$) or, in case of
equal sequence numbers, the existing entry is invalid, or the new hop
count smaller ($\dsn = \sqn{\rt}{\dip} \wedge
(\status{\rt}{\dip}=\inval \vee \hops < \dhops\rt\dip)$).
}

A network is modelled as a parallel composition of its constituent nodes.%
\footnote{Here, associativity and commutativity of $\|$ (\Thm{2}) is essential.}
For all nodes of a network---characterised by a set $\IP\mathop{\subseteq}\tIP$ of
unique identifiers $\dval{ip}\in\IP$---the node expression
$\dval{ip}:P:R$ is initialised with the parallel process
\\[1.5mm]
\centerline{$
P\ \ :=\ \ \xi,\aodv{\ip}{\rt}{\sn}{\rreqs}{\queues}\ \parl\ \xii,\Qmsg{\msgs}\ .
$}\\[1.5mm]
The sequential process $\aodv{\ip}{\rt}{\sn}{\rreqs}{\queues}$ deals with
the detailed message handling of the node, manages its routing table
$\rt$, stores its own sequence number in $\sn$, records all route requests seen so far in $\rreqs$ and
maintains in $\queues$ packets to be sent. The process $\Qmsg{\msgs}$
manages the queueing of messages as they arrive; it is always able to
receive a message even if $\AODV$ is busy updating $\rt$, forwarding
requests etc. Whenever a message arrives $\Qmsg{\msgs}$ appends it to
the queue $\msgs$, passing it on to $\AODV$ whenever it can.  The
composition $\parl$ is crucial here to express this ``buffering
mechanism" occurring in actual implementations of AODV.

Any node is initialised with its own identifier stored in the variable $\ip$,
an empty routing table, the sequence number 1, and empty
sets of seen route requests and stored data packets. Also the
queue of received messages is empty.

The process $\AODV$ receives messages from $\QMSG$ and then, depending
on their types, delegates the response to the appropriate process :
$\PKT$ (for data), $\RREQ$ (for requests), $\RREP$ (for replies) and
$\RERR$ (for errors). 
In this paper we 
give only the specification of $\RREP$ (cf. Process~\ref{pro:rrep});
the specifications of the other processes can be found in~\cite[Sect. 6]{TR11}.

Usually, $\RREP$ updates the routing table with information from the
route reply message $\rrep{\hops}{\dip}{\dsn}{\oip}{\sip}$, meaning that it is a reply to a former request
initiated by $\oip$ for destination $\dip$, that it was sent by
($1$-hop neighbour) $\sip$, and that it takes $\hops$ hops from $\sip$
to $\dip$. The sequence number $\dsn$ measures the ``freshness" of
this information. In case the current node is $\oip$, receipt of this
message establishes a route from $\oip$ to~$\dip$.
Only when the new information leads to an actual update of the
  routing table (Line~\ref{rrep:line3}), and
  the current node is not the final destination $\oip$ of the route
  reply (Line~\ref{rrep:line9}), the RREP message will be forwarded (Line~\ref{rrep:line13}).
  In case the unicast is unsuccessful (Line~\ref{rrep:line15}),
the link connecting the current
node to $\nhop{\rt}{\oip}$ must be broken and the process initiates
the procedure for error reporting
(Lines~\ref{rrep:line16}--\ref{rrep:error-cast}).  This involves
determining which other nodes are ``interested" in that link, because
it contributes \hspace{-.1pt}to \hspace{-.1pt}their \hspace{-.1pt}routes. \hspace{-.1pt}Those \hspace{-.1pt}interested \hspace{-.1pt}nodes \hspace{-.1pt}are \hspace{-.1pt}stored \hspace{-.1pt}in
\hspace{-.1pt}the \hspace{-.1pt}precursor \hspace{-.1pt}lists\hspace{-.1pt} inside $\rt$ and an error message is sent to the
nodes it finds there via the action {\bf groupcast}. Before that, the
node marks as invalid all routes in its routing table which use the
failed link, and increments their sequence numbers
(Lines~\ref{rrep:line16}--\ref{rrep:line18}).%

\begin{table}[t]
\vspace{-5.5ex}
  \algsetup{indent=0.75em}
  \algsetup{linenodelimiter=.,linenosize=\tiny}
  \begin{algorithm}[H]
    {\scriptsize
      \caption{RREP handling}
      \label{pro:rrep}
      \begin{algorithmic}[1]
        \DEFPROCESS{\RREP}{\hops\,,\,\dip\,,\,\dsn\,,\,\oip\,,\,\sip\,;\,\ip\,,\,\rt\,,\,\sn\,,\,\rreqs\,,\,\queues}
	\COMLINE{routing that describes the handling of a received route reply}
	\IF[the routing table has to be updated]{$\rt\not=\upd{\rt}{(\dip,\dsn,\kno,\val,\hops+1,\sip,\emptyset)$}}						\label{rrep:line3}
	 	\PAR
		\UPD{\rt:=\upd{\rt}{(\dip,\dsn,\kno,\val,\hops+1,\sip,\emptyset)}}													\label{rrep:line5}
		\IF[this node is the originator of the corresponding RREQ]{$\oip = \ip$}											\label{rrep:line6}
			\UPD{\queues:=\setP{\queues}{\dip}{\nonpen}}\COMMENT{set queue-flag to non-pending}						\label{rrep:line7}
			\COMLINE{a packet may now be sent; this is done in the process \AODV}
			\aodvL{\ip}{\sn}{\rt}{\rreqs}{\queues}																	\label{rrep:line8}
		\ELSIF[this node is not the originator; forward RREP]{$\oip \not= \ip$}											\label{rrep:line9}
			\PAR
				\IF[valid route to \oip]{$\oip\in\akD{\rt}$}															\label{rrep:line11}
					\COMLINE{add next hop towards $\oip$ as precursor and forward the route reply}						\label{rrep:line12}									

					\UPD{\rt := \addprecrt{\rt}{\dip}{\{\nhop{\rt}{\oip}\}}}												\label{rrep:line12a}									
					\UPD{\rt := \addprecrt{\rt}{\nhop{\rt}{\dip}}{\{\nhop{\rt}{\oip}\}}}										\label{rrep:line12b}
					\STARTPRIO
						\unicast{\nhop{\rt}{\oip}}{\rrep{$\hops+1$}{\dip}{\dsn}{\oip}{\ip}}\ .								\label{rrep:line13}
						\aodvL{\ip}{\sn}{\rt}{\rreqs}{\queues}														\label{rrep:line14}
					\PRIO
						\COMspec{If the packet transmission is unsuccessful, a RERR message is generated}\label{rrep:line15}
						\UPD{\dests:=\{(\rip,\inc{\sqn{\rt}{\rip}})\,|\,\rip\in\akD{\rt}\wedge \nhop{\rt}{\rip}=\nhop{\rt}{\oip}\}}			\label{rrep:line16}
						\UPD{\rt:=\inv{\rt}{\dests}}																\label{rrep:line18}					
						\UPD{\pre:=\bigcup\{\precs{\rt}{\rip}\,|\,(\rip,*)\in\dests\}}										\label{rrep:line17}
						\UPD{\dests:=\{(\rip,\rsn)\,|\,(\rip,\rsn)\in\dests\ans \precs{\rt}{\rip}\not=\emptyset}					\label{rrep:line17a}
						\groupcast{\pre}{\rerr{\dests}{\ip}}\ .\ \aodv{\ip}{\sn}{\rt}{\rreqs}{\queues} 							\label{rrep:error-cast}
					\ENDPRIO
				\ELSIF[no valid route to \oip]{$\oip\not\in\akD{\rt}$}
					\aodvL{\ip}{\sn}{\rt}{\rreqs}{\queues}
				\ENDIFii
			\ENDPAR
		\ENDIFii
		\ENDPAR
	\ELSIF[the routing table is not updated]{$\rt=\upd{\rt}{(\dip,\dsn,\kno,\val,\hops+1,\sip,\emptyset)$}}							\label{rrep:line25}
		\PAR
		\IF[this node is the originator of the corresponding RREQ]{$\oip = \ip$}											\label{rrep:line27}
			\UPD{\queues:=\setP{\queues}{\dip}{\nonpen}}\COMMENT{set queue-flag to non-pending}						\label{rrep:line27a}
			\aodvL{\ip}{\sn}{\rt}{\rreqs}{\queues}																	\label{rrep:line28}
		\ELSIF[this node is not the originator; drop RREP]{$\oip \not= \ip$}
			\aodvL{\ip}{\sn}{\rt}{\rreqs}{\queues}																	\label{rrep:line26}
		\ENDIFii
		\ENDPAR
	\ENDIFii

	\end{algorithmic}
    }
  \end{algorithm}

\vspace*{-8ex}
\end{table}%

\subsection{Invariants}\label{sec:invariants}
\newcommand{\hopsc}{\dval{hops}_c}
\newcommand{\dipc}{\dval{dip}_c}
\newcommand{\dsnc}{\dval{dsn}_c}
\newcommand{\ipc}{\dval{ip}_{\hspace{-1pt}c}}
\newcommand{\xiN}[1]{\xi_N^{#1}}
\newcommand{\zetaN}[1]{\zeta_N^{#1}}
\newcommand{\RN}[1]{R_N^{#1}}
All processes except $\QMSG$ maintain the five data variables {\ip}, {\sn},
{\rt}, {\rreqs} and {\queues}. Next to that $\QMSG$ maintains the variable $\msgs$.
Hence, these $6$ variables can be evaluated at any time.
Moreover, every node expression in the transition system looks like
\vspace{-1ex}
\[
\dval{ip}:\left(\xi,P\ \parl\ \xii,\Qmsg{\msgs}\right):R
\ ,\]
where $P$ is a state either in the process $\AODV$,
$\PKT$,
$\RREQ$,
$\RREP$ or
$\RERR$.
 Hence the state of the transition system for a node $\dval{ip}$
is determined by
the process $P$,
the range $R$, and
the two valuations $\xi$ and $\xii$.
If a network consists of a (finite) set $\IP\subseteq\tIP$ of nodes, a
reachable network expression $N$ is an encapsulated parallel composition
of node expressions---one for each $\dval{ip}\in\IP$.
To distil current information about a node from $N$,
we define the following projections for valuation $\xi$ and range $R$:

\begin{tabular}{@{}l@{\,$:=$\,}l@{\ where\ \,}l@{\,:\,}c@{\,:\,}l@{\,\ is a node expression of $N$}l}
$\RN{\dval{ip}}$       &$R$,         & $\dval{ip}$ & $(*,*\parl *,*)$        & $R$  &\ ,\\[0.5mm]
$\xiN{\dval{ip}}$       &$ \xi$,      & $\dval{ip}$  & $(\xi,*\parl *,*)$     & $*$   &\ .\\[0.5mm]
\end{tabular}

\noindent
For example, $\xiN{\dval{ip}}(\rt)$ evaluates the current routing table maintained by node \dval{ip} in the network expression $N$.
\begin{proposition}\label{prop:1}\rm
If a route reply is sent by a node $\ipc$, different from
the destination of the route, then the content of $\ipc$'s routing table
must be consistent with the information inside the message, i.e., if\\[2mm]
\centerline{
$N\ar{R:\starcastP{\rrep{\hopsc}{\dipc}{\dsnc}{*}{\ipc}}}N'$
}\\[1mm]
then $\dipc\in\kD{\xiN{\ipc}(\rt)}$,
$\sqn{\xiN{\ipc}(\rt)}{\dipc}\mathop= \dsnc$,
$\dhops{\xiN{\ipc}(\rt)}{\dipc}\mathop=\hopsc$, and
$\status{\xiN{\ipc}(\rt)}{\ipc}\mathop=\val$.
\end{proposition}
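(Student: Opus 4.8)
The plan is to prove this by induction on the length of the derivation of the reachable network expression $N$, in the style of the invariant proofs that are standard for this kind of process-algebraic model. The base case is the initial network expression, in which no route reply has ever been sent, so the implication holds vacuously. For the inductive step I would assume the property holds for all reachable $N$ and show it is preserved by every transition $N\ar{a}N'$; but since the statement is really about the transition $N\ar{R:\starcastP{\rrep{\hopsc}{\dipc}{\dsnc}{*}{\ipc}}}N'$ itself, the crucial work is to trace back, using the operational rules of Tables~\ref{tab:sos sequential}--\ref{tab:sos network}, which sequential process action on node $\ipc$ could have produced a $\starcastP{}$ of a message of the form $\rrep{\cdot}{\cdot}{\cdot}{*}{\ipc}$.

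The key observation is that a route reply message mentioning $\ipc$ as its sender (the last argument $\sip$) can only be emitted by the process running on node $\ipc$ itself, and only at the points in the AODV specification where a $\rrepID$ message is constructed and then broadcast/unicast/groupcast. In the fragment shown, the only such place in $\RREP$ is Line~\ref{rrep:line13}, the forwarding unicast, which is guarded by the successful-update test of Line~\ref{rrep:line3} and the ``not the originator'' test of Line~\ref{rrep:line9} (i.e.\ $\ipc\neq\oip$, which matches the hypothesis ``different from the destination of the route''). One must also account for the analogous RREP-generation point in the $\RREQ$ process (where an intermediate node or the destination $d$ replies on behalf of $\dipc$) --- by inspecting those lines in \cite{TR11} one sees that in each case the message $\rrepID$ is built precisely from the fields $\dhops{\rt}{\dipc}$, $\sqn{\rt}{\dipc}$, etc., of the \emph{current} routing table, immediately after an $\fnupd$ (or after a freshness check) that guarantees $\dipc\in\kD{\rt}$ and $\status{\rt}{\dipc}=\val$. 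Since the valuation $\xi$ is not changed between that update and the cast action, the routing-table contents read off by $\xiN{\ipc}$ at the moment of the cast are exactly the ones written into the message, giving the four claimed equalities. The condition $\status{\xiN{\ipc}(\rt)}{\ipc}=\val$ (the node's route \emph{to itself}) follows from a separate, simpler invariant --- every node always has a valid $0$-hop route to its own address --- which one either assumes from the companion report or proves alongside.

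I would organise the write-up as: (i) reduce to showing the message on the transition must originate from a $\starcastP{}$-rule applied at node $\ipc$ (using that the network rules in Table~\ref{tab:sos network} only relabel and synchronise casts, never alter the sender field); (ii) enumerate the syntactic positions in the AODV specification at which a message $\rrepID(\dots,\ipc)$ is cast --- finitely many, all within $\RREP$ and $\RREQ$; (iii) for each such position, read off from the guarding conditions and the preceding assignments that the four routing-table facts hold under the valuation in force; (iv) invoke the self-route invariant for the last conjunct. The main obstacle is step (ii)--(iii): it requires a careful case analysis over the full AODV process definitions, which are not reproduced in this excerpt, so the argument leans on the detailed specifications in \cite[Sect.~6]{TR11}, and one must be meticulous that \emph{every} syntactic occurrence where such a reply is generated indeed satisfies the update/freshness preconditions --- a single overlooked branch would break the invariant. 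The $\tau$-transitions (assignments, guard evaluations) between the update and the cast are the routine part; the discipline is just to verify no intervening step can modify $\xi(\rt)$, $\xi(\dipc\text{-related fields})$ in a way that desynchronises them from the already-constructed message, which is immediate because the message is a closed value once formed.
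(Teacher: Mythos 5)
Your overall strategy is the one the paper uses: an exhaustive case analysis over the finitely many syntactic positions in the AODV specification at which a message $\rrepID(\ldots,\ipc)$ can be cast, followed by local reasoning that at each such position the guards and the immediately preceding $\fnupd$ force $\ipc$'s routing table to agree with the fields of the message, and that nothing relevant changes between the update and the cast. The paper does exactly this for the single occurrence in $\RREP$ --- the unicast of Line~\ref{rrep:line13}, justified by the guard of Line~\ref{rrep:line3} and the update of Line~\ref{rrep:line5} --- and defers the occurrences in $\RREQ$ to the technical report. Your outer induction on reachability is harmless but does no work here: the claim is a property of the single cast transition, not an invariant that needs to be propagated along traces.

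There is, however, a genuine gap in your step (iv). You read the last conjunct literally as a statement about $\ipc$'s route \emph{to itself} and propose to discharge it with an invariant that every node always has a valid $0$-hop route to its own address. No such invariant holds in this model: routing tables are initialised \emph{empty}, and self-entries are merely \emph{permitted}, never guaranteed to exist. The paper instead establishes $\status{\xiN{\ipc}(\rt)}{\dipc}=\val$ --- validity of the entry for $\dipc$ --- which falls out of the same Line~\ref{rrep:line5} update, since the freshly inserted entry carries the flag $\val$; the occurrence of $\ipc$ in the proposition's fourth conjunct is evidently a typo for $\dipc$, as the paper's own proof confirms. A related slip: the hypothesis that $\ipc$ differs from the destination of the route is the condition $\ipc\neq\dipc$, whose role is to \emph{exclude} the case in $\RREQ$ handling where the destination itself originates the reply (there the claims about an entry for $\dipc$ in $\ipc$'s own table could fail for exactly the self-entry reason above); it is not the guard $\ip\neq\oip$ of Line~\ref{rrep:line9}, which only decides whether a received reply is forwarded at all. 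So in your step (ii) the destination-generates-reply branch should be discarded by the hypothesis rather than listed among the cases to be verified.
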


\begin{proof}
We have to check all cases where a route reply is sent.
Here we restrict ourselves to $\RREP$; the entire proof can be found in~\cite[Prop. 7.10(b)]{TR11}.
A route reply occurs only in Line~\ref{rrep:line13},
where a message $\xi(\rrep{\hops\mathop{+}1}{\dip}{\dsn}{\oip}{\ip})$
is unicast. Here $\xi$ is the current valuation $\xiN{\dval{ip}}$.

Hence $\hopsc:=\xi(\hops)\mathord+1$,
$\dipc:=\xi(\dip)$, $\dsnc:=\xi(\dsn)$,
$\ipc:=\xi(\ip)=\dval{ip}$ and \mbox{$\xiN{\ipc}=\xi$}.
Using $(\xi(\dip),\xi(\dsn),\kno,\val,\xi(\hops)\mathord{+}1,\xi(\sip),\emptyset)$ as
new entry, the routing table is updated at Line~\ref{rrep:line5}.
With exception of its precursors, which are irrelevant here, the routing table
does not change between Lines~\ref{rrep:line5} and \ref{rrep:line13};
nor do the values of the variables {\hops}, {\dip} and {\dsn}.
Line~\ref{rrep:line3} guarantees that
during the update in Line~\ref{rrep:line5},
the new entry is inserted into the routing table, so
\[\begin{array}[b]{l@{~=~}l@{~=~}l}
\sqn{\xi(\rt)}{\xi(\dip)} & \xi(\dsn) & \dsnc\\
 \dhops{\xi(\rt)}{\xi(\dip)} & \xi(\hops)+1 & \hopsc\\
\status{\xi(\rt)}{\xi(\dip)} & \xi(\val) & \val\;.
\end{array}\]

\vspace{-1.9\abovedisplayskip}\qed
\end{proof}

The classical notion of loop freedom is a term that informally means
that ``a packet never goes round in cycles without (at some point)
being delivered". 
This dynamic definition is not only hard to formalise, 
it is also too restrictive a requirement for AODV\@. There are situations where 
packets are sent in cycles, but which are not considered  ``bad''.
This can happen when the destination is highly mobile 
and  the packet ``follows'' the destination and keeps travelling 
through the network. 
Therefore, the sense of loop freedom is much
better captured by a static invariant,
saying that at any given
time the collective routing tables of the nodes do not admit a loop.

\newcommand{\RG}[2]{\mathcal{R}_{#1}(#2)}
To this end we define the \emph{routing graph} of network expression $N$ with respect to
destination~$\dval{dip}$ by $\RG{N}{\dval{dip}}\mathop{:=}\linebreak[1](\IP,E)$, where
all nodes of the network form the set of vertices and there is an
arc $({\dval{ip}},{\dval{ip}}')\in E$ iff $\dval{ip}\mathop{\not=}\dval{dip}$ and
$
(\dval{dip},*,*,\val,*,\dval{ip}',*)\mathop{\in}\xiN{\dval{ip}}(\rt).
$

An arc in a routing graph states that $\dval{ip}'$ is the next hop on
a valid route to $\dval{dip}$ known by $\dval{ip}$; a path in a routing
graph describes a route towards $\dval{dip}$ discovered by AODV\@.
We say that a network expression $N$ is \emph{loop free} if the
corresponding routing graphs $\RG{N}{\dval{dip}}$ are loop free, for
all $\dval{dip}\mathop{\in}\IP$. A routing protocol, such as AODV, is
\emph{loop free} iff all reachable network expressions are loop free.

To prove loop freedom of AODV, we first establish a useful invariant.
\begin{theorem}\rm\label{thm:loop free}
Along a path towards a destination \dval{dip} in the routing
graph of a reachable network expression $N$, until it reaches either
\dval{dip} or a node without a valid routing table entry to dip,
either the sequence number strictly increases, or 
this number stays the same and the hop count strictly decreases.
{\small
\begin{equation*}
\begin{array}{@{}rcl@{}}
&&\dval{dip}\in\akD{\xiN{\dval{ip}}(\rt)}\cap \akD{\xiN{\dval{nhip}}(\rt)}\ans\dval{nhip}\not=\dval{dip}\\
&\Rightarrow& 
\sqn{\xiN{\dval{ip}}(\rt)}{\dval{dip}} < \sqn{\xiN{\dval{nhip}}(\rt)}{\dval{dip}} \ors
\big(\sqn{\xiN{\dval{ip}}(\rt)}{\dval{dip}} = \sqn{\xiN{\dval{nhip}}(\rt)}{\dval{dip}}\\
&& \ans
\dhops{\xiN{\dval{ip}}(\rt)}{\dval{dip}} > \dhops{\xiN{\dval{nhip}}(\rt)}{\dval{dip}}\big)
\ ,
\end{array}
\end{equation*}}%
where $N$ is a reachable network expression and $\dval{nhip}:=\nhp{\dval{ip}}$ is the IP address of the next hop.
\end{theorem}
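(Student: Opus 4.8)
The statement is an invariant over all reachable network expressions, so the natural approach is induction on the length of the execution that reaches $N$, following exactly the format of the proof of Proposition~\ref{prop:1}. The base case is the initial network expression, in which every routing table is empty, so the antecedent $\dval{dip}\in\akD{\xiN{\dval{ip}}(\rt)}$ is vacuously false and there is nothing to prove. For the inductive step I would assume the invariant holds for $N$ and that $N\ar{a}N'$, and show it still holds for $N'$. Since the conclusion only concerns, for each destination $\dval{dip}$, those pairs $(\dval{ip},\dval{nhip})$ with $\dval{nhip}=\nhp{\dval{ip}}$ and $\dval{dip}\in\akD{\xiN{\dval{ip}}(\rt)}\cap\akD{\xiN{\dval{nhip}}(\rt)}$, the only transitions that can threaten the invariant are those that change some node's routing table via $\fnupd$ or $\fninv$; all casts, receives, deliveries, connect/disconnect actions, and $\tau$-steps that do not touch routing tables leave everything relevant unchanged.

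\medskip
\noindent
So the core of the argument is a case analysis over the program lines (in $\AODV,\PKT,\RREQ,\RREP,\RERR$) at which a routing table is modified. For an invalidation step ($\fninv$), the affected entries become invalid, which can only \emph{remove} arcs from routing graphs (it can falsify the antecedent, never the consequent); one must check that it does not create a new pair satisfying the antecedent with the wrong ordering, but since $\akD$ is unaffected by invalidation at the source end in the way the invariant is phrased (the condition is on $\akD$, the set of valid \emph{and known} destinations) --- here I would lean on auxiliary invariants from \cite{TR11} that pin down exactly how $\fninv$ interacts with $\akD$ and with sequence numbers, in particular that invalidation increments the sequence number so freshness is not lost. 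For an update step $\upd{\rt}{\r}$ with $\r=(\dval{dip},\dsn,\dsk,\val,\hops,\nhip',\pre)$ at some node $\dval{ip}$, I would use the definition of $\fnupd$: the entry is overwritten only if $\dsn>\sqn{\rt}{\dval{dip}}$, or $\dsn=\sqn{\rt}{\dval{dip}}$ together with the old entry invalid or $\hops<\dhops{\rt}{\dval{dip}}$. The key point is that the new next hop $\nhip'$ together with the new sequence number $\dsn$ and hop count $\hops$ must dominate, in the lexicographic $(\textit{sqn},-\textit{hops})$ order, the value stored at $\nhip'$ for $\dval{dip}$ --- and this is precisely what Proposition~\ref{prop:1} (and its analogues for RREQ and the data/error processes) delivers: whoever sent the message that triggered this update had a routing table entry consistent with the message's $(\hops,\dsn)$ fields, and the ``$+1$'' in the hop count forwarding plus the message's sequence number give the strict lexicographic increase required.

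\medskip
\noindent
Concretely, for each line that performs an update I would: (i) identify the message or local data that supplies the new entry's fields; (ii) invoke the appropriate ``sender-consistency'' proposition (Proposition~\ref{prop:1} for RREP, and its siblings for RREQ/RERR/PKT from \cite{TR11}) to learn what the sender's --- i.e.\ $\nhip'$'s --- routing table contains for $\dval{dip}$; (iii) check that $\fnupd$'s guard forces $\sqn{\xiN{N'}{\dval{ip}}(\rt)}{\dval{dip}}<\sqn{\xiN{N'}{\nhip'}(\rt)}{\dval{dip}}$ or equality-with-strictly-smaller-hop-count; and (iv) verify that the \emph{other} arcs --- those not emanating from $\dval{ip}$, and those from $\dval{ip}$ to destinations other than $\dval{dip}$ --- are untouched, so the induction hypothesis carries them. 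One subtlety to handle carefully is that changing $\dval{ip}$'s entry for $\dval{dip}$ also affects pairs of the form $(\dval{ip}',\dval{ip})$ where some other node $\dval{ip}'$ has $\dval{ip}$ as its next hop towards $\dval{dip}$: here the induction hypothesis gave the relation between $\dval{ip}'$'s old data and $\dval{ip}$'s old data, and one needs that the update at $\dval{ip}$ only increases $\dval{ip}$'s lexicographic value, so the inequality is preserved --- which again follows from the $\fnupd$ guard. The main obstacle is thus not any single line but the sheer bookkeeping: one must marshal the right battery of previously-established invariants (monotonicity of each node's sequence number, the relationship between $\fnsqn$ and $\akD$, that $\nhip'$ is always a genuine $1$-hop neighbour so the sender-consistency proposition applies) and then dispatch each update site uniformly. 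I would present the RREP cases in detail and refer to \cite[Thm.~7.30]{TR11} for the remaining processes.
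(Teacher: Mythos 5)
Your plan matches the paper's approach: the paper itself only states that ``the proof uses Proposition~\ref{prop:1}'' and defers the details to \cite{TR11}, and the ingredients you identify---induction over reachable network expressions, a case analysis over the routing-table update and invalidation sites, sender-consistency via Proposition~\ref{prop:1} combined with the $+1$ in the forwarded hop count, and the auxiliary facts that sequence numbers never decrease and are incremented upon invalidation---are exactly the ones the paper flags as crucial (see the discussion following the theorem). Your outline is correct and is essentially the proof strategy carried out in the technical report.
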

\noindent
The proof uses \Prop{1}; it can be found in~\cite{TR11}.

From this, we immediately conclude that AODV is loop free.

More precisely, {\em our} \awn-specification of AODV 
is loop free. 
It is our belief that, up to the abstraction of time and any optional features presented in~\cite{rfc3561}, 
it reflects precisely the intention and the meaning of the RFC\@. 
However, when formalising AODV, we came across ambiguities, which yield different possible interpretations.
Such interpretations can be seen as variants of AODV and, as we discovered, only a few of them are loop free. 
Since loop freedom is a sine qua non for routing protocols like
  AODV, we endeavour to resolve the ambiguities as much as possible by
  discarding the interpretations that lead to loops.

We briefly explain one of the problems found.
A crucial requirement in the proof of \Thm{loop free} is that sequence
numbers in routing table entries are never decreased, and increased upon invalidating the entry.
Following the RFC literally, a ``node initiates processing for a RERR message''\footnotemark, ``if it receives a RERR from a neighbor''\addtocounter{footnote}{-1}\footnotemark.
For every destination to be invalidated the ``destination sequence number''\addtocounter{footnote}{-1}\footnotemark{} is ``copied from the incoming RERR''\addtocounter{footnote}{-1}\footnotemark.
We have shown that this copying in combination with {\em
  self-entries} (entries for \dval{ip} in \dval{ip}'s own routing
table)%
\footnotetext{Section 6.11 of the RFC~\cite{rfc3561}}%
\footnote{In our model we allow self-entries, since they are not explicitly forbidden;
they also occur in real implementations, e.g., Kernel AODV~\cite{AODVNIST}; they are forbidden by others such as
AODV-UU~\cite{AODVUU}.} violate the above requirement and yield loops; a detailed example is given in~\cite{TR11}.
In our specification this behaviour does not occur since we slightly
modified the invalidation procedure~\cite[Sect. 5]{TR11}
to ensure an increase of sequence number for an invalidated entry, in  the spirit of
  Section~6.2 of the RFC.

\subsection{Formalising Temporal Properties}\label{sec:properties}
Our formalism  enables verification of correctness properties. While
some properties, such as loop freedom, are invariants on the routing
tables, others require reasoning about the temporal order of
transitions. We use Computation Tree Logic (CTL) to specify and
discuss one such property, namely \emph{packet delivery}.

CTL uses the path quantifiers $\mathbf{A}$ and $\mathbf{E}$, and the temporal operators $\mathbf{G}, \mathbf{F}, \mathbf{X}$, and $\mathbf{U}$. The (state) formula $\mathbf{A} \phi$ is satisfied in a state if all paths starting in that state satisfy $\phi$, while $\mathbf{E} \phi$ is satisfied if some path satisfies $\phi$. The (path) formulas $\mathbf{G} \phi, \mathbf{F} \phi$ and $\mathbf{X} \phi$ mean that $\phi$ holds globally in all states, in some state, and in the next state of a path, respectively. The \emph{until}  $\phi \mathbf{U} \psi$ means that, until a state occurs along the path that satisfies $\psi$, property $\phi$ has to hold. In CTL a temporal operator is always immediately preceded by a path quantifier. 
Here CTL is interpreted on the unfolding into a tree of the transition system
  generated by our operational semantics.

The property of \emph{packet delivery} says that if a client submits a
packet, it will eventually be delivered to the destination. However, in a WMN 
it is not guaranteed that this property holds, since nodes can get
disconnected, e.g. due to node mobility. A useful formulation has to be weaker. AODV should 
guarantee \emph{packet delivery} only if an end-to-end route
exists long enough. More precisely, AODV should
guarantee delivery of a packet submitted by a client at node
$\dval{oip}$ with destination $\dval{dip}$, when $\dval{oip}$ is
connected to $\dval{dip}$ and afterwards no link in the network gets
disconnected. This means that for any pair $\dval{oip}$ and
$\dval{dip}$, and any data $\dval{d}$, the following should
hold:\\[1mm]
\centerline{$\begin{array}{r@{}l}
    \mathbf{AG}&(\dval{oip}:{\bf newpkt}(\dval{d},\dval{dip}) \wedge \textbf{connected}^*(\dval{oip},\dval{dip}))\\
    &\Rightarrow \mathbf{AF} (\textbf{disconnect}(*,*) \vee(\dval{dip}:\deliver{\dval{d}}))\ .
\end{array}$}\\[0.5mm]
$\colonact{\dval{oip}}{{\bf newpkt}(\dval{d},\dval{dip})}$ models
submission of a new packet at $\dval{oip}$,
$\colonact{\dval{dip}}{\deliver{\dval{d}}}$ that the destination
receives it, and $\textbf{disconnect}(*,*)$ the action of
disconnecting. We treat these transitions as predicates, with the
understanding that along a path the state immediately succeeding such
a transition satisfies it. The predicate
$\textbf{connected}^*(oip,dip)$ is true if there are exist nodes
$\dval{ip}_0,\ldots,\dval{ip}_n$ such that
$\dval{ip}_0\mathop=\dval{oip}$, $\dval{ip}_n\mathop=\dval{dip}$ and
$\dval{ip}_{i}\in\RN{\dval{ip}_{i-1}}$. 
for $i\mathop=1,\ldots,n$.

Surprisingly, AODV does not satisfy this property.
One cause is that AODV nodes do not forward route replies from which
they do not learn anything new. However, the information may be vital
for the potential recipients of the forwarding.
See \cite[Sect.\ 8]{TR11} for further discussion of a counterexample.

\section{Related Work}\label{sec:related_work}
Several process algebras for MANETs have been proposed:
CBS\#~\cite{NH06},
CWS~\cite{CWS},
CMAN~\cite{G07},
CMN~\cite{CMN},
the $\omega$-calculus \cite{SRS10}
and RBPT \cite{RBPT}.
All these languages, as well as ours, feature a form of local
broadcast, in which a single message, sent by one node, can be
received by other nodes within transmission range, given an arbitrary
topology.  In CWS the topology is fixed, whereas the other formalisms
deal with arbitrary changes in topology.

The latter four formalisms model a \emph{lossy} broadcast, in which a
potential receiver may lose a message; in CBS\# and CWS, any node
within range must receive a message $m$ sent to it, provided the node
is \emph{ready} to receive it, i.e., in a state that admits a
transition $\receive{m}$. This proviso makes all these calculi
\emph{non-blocking}, meaning that no sender can be delayed in
sending a message simply because one of the potential recipients
is not ready to receive it.

The syntax of CBS\# and CWS does not permit the construction of
meaningful nodes that are always ready to receive a message.  Hence
our model is the first that assumes that any message is received by a
potential recipient within range.  It is this feature that allows us
to evaluate whether a protocol satisfies the \emph{packet delivery}
property.  \emph{Any} routing protocol formalised in any of the other
formalisms would automatically fail to satisfy such a property.

Besides this ensured broadcast, the novel \emph{conditional unicast}
operator chooses a continuation process dependent on whether the
message can be delivered.  This operator is essential for the correct
formalisation of AODV\@.  In practice such an operator may be
implemented by means of an acknowledgement mechanism; however, this is
done at the link layer, from which the AODV specification
\cite{rfc3561}, and hence our formalism, abstracts. 
One could
formalise a conditional unicast as a standard unicast in the
scope of a priority operator \cite{CLN01}; however, our operator
prioritises, while allowing an operational semantics within the
de Simone format.

Although our treatment of data structures follows the classical
approach of universal algebra, and is in the spirit of formalisms like
$\mu$CRL \cite{GP95}, we have not seen a process algebra that freely
mixes in imperative programming constructs like variable assignment.
Yet this helps to properly capture AODV and other routing protocols.

Our formalisation of AODV \cite{TR11}, which is partly
shown here, has grown from elaborating a partial formalisation
of AODV in \cite{SRS10}. 
The features of our process algebra were
largely determined by what we needed to enable a complete and accurate
formalisation of this protocol. We conjecture that the same formalism
is also applicable to a wide range of other wireless protocols.

Loop freedom is a crucial property of network protocols,
  commonly claimed to hold for AODV \cite{rfc3561}. In \cite{TR11}
  we show that several \emph{interpretations} of AODV---consistent
  ways to revolve the ambiguities in the RFC---fail to be loop free,
  while proving loop freedom of others.
 A preliminary draft of AODV has been shown to be not loop free
(for other reasons) in~\cite{BOG02}.
 In~\cite{ZYZW09} a proof sketch of loop freedom for a restricted
 version of AODV is given, using an interactive theorem prover.

\section{Conclusion and Outlook}\label{sec:conclusion}

We have proposed a novel algebra covering major aspects of WMN routing
protocols.  We have accurately modelled the core of AODV, a widely
used protocol of practical relevance.  In contrast to other works, our
model covers the crucial aspect of data handling, such as maintaining
routing table information.  We have formalised and proven some of
AODV's general properties.  Our model provides, in combination with
abstraction from lower network layers, a practical and powerful tool
for WMN protocol specification, evaluation and verification.

Our analysis of AODV uncovered several ambiguities in the RFC
\cite{rfc3561}.  Finding ambiguities and unexpected behaviour is not
uncommon for RFCs in general. This shows that the specification of a
reasonably rich protocol such as AODV cannot be described precisely
and unambiguously by simple (English) text only; formal methods are
indispensable for this purpose.

More detailed analysis requires the addition of time and probability:
the former to cover aspects such as AODV's handling (deletion) of
stale routing table entries and the latter to model the probability
associated with lossy links.

\bibliographystyle{splncs03}
\bibliography{aodv}
\end{document}